\documentclass[twocolumn]{autart}    

\usepackage{amsmath,amssymb,amsfonts,bm}
\usepackage{graphicx}
\usepackage{algorithm}
\usepackage{algorithmicx,algpseudocode}
\usepackage{cite}
\usepackage{color}
\let\theoremstyle\relax

\usepackage{amsthm}
\usepackage{cuted}
\usepackage[mathscr]{eucal}
\theoremstyle{plain}
\newtheorem{theorem}{Theorem}[section]
\newtheorem{lemma}[theorem]{Lemma}
\newtheorem{assumption}[theorem]{Assumption}
\newtheorem{corollary}[theorem]{Corollary}
\newtheorem{proposition}[theorem]{Proposition}
\theoremstyle{definition}

\theoremstyle{remark}
\newtheorem*{remark}{Remark}
\usepackage{wrapfig}

\usepackage{wrapfig}
\usepackage[T1]{fontenc}
\usepackage[utf8]{inputenc}

\begin{document}

\begin{frontmatter}

\title{Topological Feasibility Guarantees for Differentiable Predictive Control} 

\author[Guangyu,Guangyu1]{Guangyu Wu}\ead{gwu1@alumni.nd.edu},    
\author[{Jan1}]{J\'an Drgo\v na}\ead{jdrgona1@jh.edu}              

\thanks{Part of the conceptual work was initiated while the author was with Nanyang Technological University.}
\address[Guangyu]{Department of Mathematical Sciences, Chalmers University of Technology, Sweden (Incoming)} 
\address[Guangyu1]{College of Computing and Data Science, Nanyang Technological University, Singapore}  
\address[Jan1]{Department of Civil and Systems Engineering, Johns Hopkins University, USA}
          
\begin{keyword}                           
Differentiable predictive control, learning-based control, feasibility analysis, topology, safety filter.             
\end{keyword}                             

\begin{abstract}
{Differentiable predictive control (DPC), a self-supervised learning approach for approximating explicit model predictive control (MPC) policies, offers significant computational advantages over online optimization-based MPC. However, feasibility guarantees, a core requirement for safe control, are currently provided either probabilistically or via online safety filters. The lack of rigorous feasibility guarantees for offline policy optimization remains an open problem.
This paper establishes deterministic feasibility guarantees for DPC using a novel topological analysis of the induced reachable safe set, without requiring online safety filters. By exploiting the inherent model-based nature of DPC, in which differentiable system dynamics are embedded directly into the computational graph, we analyze the properties of the learned control policies and the corresponding system states from topological and geometric perspectives. Inspired by our theoretical analysis, we propose a novel self-supervised offline policy learning strategy that utilizes a proxy loss with Control Barrier Functions (CBFs). Crucially, these properties not only significantly improve policy training but also enable the derivation of strict, deterministic feasibility guarantees from a finite number of training samples. Extensive closed-loop simulations validate our theoretical findings, demonstrating that the empirical constraint violations monotonically decrease to zero as the training sample size increases. Ultimately, this work illustrates that DPC policy optimization yields formal safety certificates that are structurally unattainable with conventional black-box methods, e.g., reinforcement learning (RL) or supervised learning-based approximate MPC, thereby providing a new perspective on feasibility guarantees in learning-based control.}
\end{abstract}

\maketitle

\end{frontmatter}

\section{Introduction}

Model Predictive Control (MPC) computes control actions by repeatedly solving a finite-horizon constrained optimal control problem using a predictive model of the system~\cite{mayne2000constrained,rawlings2017model,borrelli2017predictive,Grune2017}. By explicitly enforcing state and input constraints together with suitable terminal ingredients, MPC provides recursive feasibility and closed-loop stability guarantees under standard assumptions for a broad class of systems~\cite{camacho2004model,mayne2014model,Mayne2005,Lazar2006,ZHANG20168,MAIWORM201550,mesbah2016stochastic,Kohler2021}.

There are three main paradigms for solving MPC problems: implicit MPC~\cite{DIEHL2002577,Diehl2009,Domahidi2012,Wu2025}, explicit MPC~\cite{bemporad2002explicit,zeilinger2011real,kvasnica2011clipping}, and approximate MPC~\cite{hertneck2018learning,Karg2020,Chen2018}. Implicit MPC computes the optimal control sequence online by solving an optimization problem at every sampling instant, providing exact solutions at the expense of substantial computational effort. Explicit MPC shifts this computation offline via multi-parametric programming~\cite{Herceg2013,Bemporad2000}, yielding an explicit policy that can be evaluated efficiently online. However, the number of critical regions grows exponentially with the number of constraints, parameters, and prediction horizon, limiting scalability~\cite{Alessio2009}.
Approximate MPC addresses this limitation by replacing the exact explicit solution with a compact function approximator. Early approaches employed piecewise-affine representations~\cite{bemporad2011ultra,Jones2010}, while more recent methods leverage kernel-based or neural-network approximators~\cite{hertneck2018learning,tokmak2025automatic}. These approaches are typically trained offline from optimal control trajectories generated by implicit MPC and evaluated online through a single function evaluation, enabling orders-of-magnitude faster inference. This computational efficiency, however, comes at the expense of theoretical guarantees. While probabilistic guarantees have been established for neural-network approximations~\cite{hertneck2018learning}, deterministic guarantees remain difficult to obtain. Recent kernel-based methods~\cite{tokmak2025automatic} address this issue by deriving certified approximation error bounds that recover deterministic closed-loop guarantees. However, these methods suffer from the curse of dimensionality and scale poorly to high-dimensional nonlinear systems. Consequently, achieving deterministic feasibility guarantees without sacrificing scalability remains an open problem.

While existing approximate MPC methods rely on supervised learning from optimal MPC trajectories, differentiable predictive control (DPC)~\cite{drgovna2024learning,DRGONA202280} adopts a fundamentally different paradigm. By embedding the system dynamics directly into the learning architecture, DPC optimizes an explicit control policy by differentiating the MPC objective through closed-loop simulations, eliminating the need for expensive offline MPC solutions. This self-supervised learning methodology significantly improves scalability and flexibility, but providing formal feasibility guarantees for the resulting policies still remains a major challenge.

Existing safe learning-based control methods~\cite{hewing2020learning,Brunke2022,Drgona2025} predominantly provide probabilistic rather than deterministic feasibility guarantees through statistical confidence bounds, chance-constrained MPC, or probabilistically constructed safe sets in reinforcement learning~\cite{lederer2025risk,mestres2024feasibility,berkenkamp2017safe,garcia2015comprehensive}. 
To obtain deterministic safety guarantees, existing approaches generally follow one of two paradigms. The first, exemplified by learning-based MPC~\cite{aswani2013provably,Hewing2020,koller2018learning,Rosolia2018,Berberich2021,Coulson2019,KORDA2018149}, learns the system model while retaining an online MPC optimization step to preserve the recursive feasibility and stability guarantees of implicit MPC. 
The second augments learned explicit policies with online safety mechanisms~\cite{Wabersich2023}, including Control Barrier Functions (CBFs)~\cite{ames2016control,ames2019control,Lindemann2019,Xiao2023,Yang2023,DIDIER2026113009}, Hamilton-Jacobi (HJ) reachability~\cite{Fisac2019,Gillula2010,Bansal2017}, predictive safety filters~\cite{WABERSICH2021109597,leeman23a,cortez2024psf}, convex optimization-based projection layers~\cite{donti2021enforcing}, and online feasibility verification and fallback policies~\cite{Hose2025}. While both paradigms provide formal safety guarantees under suitable assumptions, they do so either by solving an optimization problem online or by modifying the learned policy online. The former sacrifices the computational advantages of explicit control policies, while the latter may compromise trajectory-level performance when corrective actions are applied myopically.

From this perspective, the central limitation is that existing methods repair infeasible policies \emph{a posteriori} rather than constructing policies that are feasible by design. 
Fundamentally, this reflects the absence of an intrinsic characterization of the feasible control manifold, requiring online corrective actions to recover safety or feasibility~\cite{chen2020guaranteed,cortez2022differentiable,xiao2023barriernet,ding2024online,he2025state,didier2026approximate,liu2025robust,Robey2020}.

These current limitations raise the fundamental question of whether learning-based controllers can be trained entirely offline while providing deterministic feasibility guarantees comparable to those of optimization-based MPC, without resorting to online optimization or safety filtering. Such a capability would substantially improve the reliability and practical deployment of learning-based control in time- and safety-critical applications.

To answer this question, we establish a theoretical framework for constructing DPC policies with strict deterministic closed-loop feasibility guarantees through offline training alone. Our approach analytically characterizes the topological structure of the feasible control manifold, ensuring that the learned explicit policy satisfies feasibility by construction and therefore requires neither \textit{a posteriori} online safety filters nor optimization-based corrective actions.

The main contributions of this paper are as follows. 
We develop a novel topological framework for analyzing the feasibility of DPC policies, establishing deterministic closed-loop feasibility guarantees from a finite number of offline training samples under mild assumptions. Building on this analysis, we develop a self-supervised offline learning strategy based on a Control Barrier Function (CBF) proxy loss that constructs neural control policies satisfying feasibility by design, eliminating the need for online safety filters or optimization-based corrective actions. Finally, we demonstrate the effectiveness of the proposed framework on several nonlinear control benchmarks, with theoretical guarantees corroborated by empirical results showing that constraint violations vanish as the number of training samples increases.

The remainder of the paper is organized as follows. Section~2 defines the parametric optimal control problem formulation and reviews DPC methodology. Section~3 presents the proposed topological feasibility analysis, and Section~4 introduces the corresponding CBF-guided offline training strategy. Section~5 discusses the proposed framework, while Section~6 presents numerical results.

\section{Differentiable Predictive Control}

We consider the continuous-time process governed by
\begin{equation}
    \d\boldsymbol{x}_t 
    = \mathbf{f}(\boldsymbol{x}_t, \boldsymbol{u}_t)\,\d t
\label{SDE}
\end{equation}
where $\boldsymbol{x}_t \in \mathbb{R}^{n_{x}}$ denotes the state vector and $\boldsymbol{u}_t \in \mathbb{R}^{n_{u}}$ the control input. 
The function $\mathbf{f}:\mathbb{R}^{n_{x}} \times \mathbb{R}^{n_{u}} \rightarrow \mathbb{R}^{n_{x}}$ represents the deterministic drift term of the dynamics. The formulation \eqref{SDE} is a general form of the ODE where $\mathbf{f}$ is either linear or nonlinear.

To obtain a discrete-time representation, the continuous-time dynamics are numerically integrated over each sampling interval. Consider the time interval $[0, T]$. The deterministic drift term is discretized using e.g., a Runge Kutta scheme, yielding the nominal discrete-time mapping 
$\boldsymbol{\mathcal{F}}:\mathbb{R}^{n_{x}} \times \mathbb{R}^{n_{u}} \rightarrow \mathbb{R}^{n_{x}}$. Denote $N$ as the prediction horizon, and $k = \frac{T}{N}$ as a discrete
time step. The resulting discrete-time state-space model is then expressed as
\begin{equation}
    \boldsymbol{x}_{k+1} = \boldsymbol{\mathcal{F}}(\boldsymbol{x}_k, \boldsymbol{u}_k),
\label{syseq}
\end{equation}
where $\boldsymbol{x}_k \in \mathbb{R}^{n_{x}}$ denotes the system state at time step $k$. 

We first review the differentiable predictive control (DPC) method in \cite{drgovna2024learning} for the ODE system governed by \eqref{syseq}. In contrast to implicit MPC, which optimizes the control input trajectories online for a given single parametric instance, the DPC aims to optimize an explicit policy $
\pi: \left(\boldsymbol{x}_{k}, \boldsymbol{\xi}_{k} \right) \rightarrow \boldsymbol{u}_{k}
$ that maps the current state and parameters to the control action.

We give a parametric formulation for the infinite-dimensional mapping $\pi$, referred to as a neural control policy, of which the parameters are denoted as $\mathbf{W}$. The problem parameters, reference state, state constraint and input constraint are denoted as $\boldsymbol{\xi}_{k}, \boldsymbol{r}_{k}, \mathbf{p}_{\mathbf{h}_{k}}, \mathbf{p}_{\mathbf{g}_{k}}$ respectively. The differentiable predictive control can be formulated as
\begin{subequations}\label{eq:opt_problem}
\begin{align}
&\min_{\mathbf{W}} \mathcal{L}_{NC}(\boldsymbol{x}_{k}, \boldsymbol{u}_k, \mathbf{r}_k) \quad 
&   \label{eq:opt_problem_cost} \\[3pt]
\text{s.t.} \quad 
& \boldsymbol{x}_{k+1} = \boldsymbol{\mathcal{F}}(\boldsymbol{x}_k, \boldsymbol{u}_k), \ k \in \mathbb{N}_0^{N-1} \label{eq:opt_problem_dyn} \\
& \boldsymbol{u}_{k} = \pi_{\mathbf{W}}(\boldsymbol{x}_{k}, \boldsymbol{\xi}_k) 
\label{eq:opt_problem_policy} \\
& h(\boldsymbol{x}_k, \mathbf{p}_{\mathbf{h}_k}) \geqslant 0
\label{eq:opt_problem_state} \\
& g(\boldsymbol{u}_k, \mathbf{p}_{\mathbf{g}_k}) \geqslant 0
\label{eq:opt_problem_input} \\
& \boldsymbol{x}_0 \in \mathbb{R}^{n_x},
\boldsymbol{\xi}_k = \{\mathbf{r}_k, \mathbf{p}_{\mathbf{h}_k},
\mathbf{p}_{\mathbf{g}_k}, \theta\} \in \mathbb{R}^{n_\xi},
\label{eq:opt_problem_vars}
\end{align}
\end{subequations}
where the cost function reads
\begin{equation}
\begin{aligned}
& \mathcal{L}_{NC}(\mathbf{W})\\= & \frac{1}{m N} \sum_{i=1}^m \sum_{k=0}^{N-1}\left(\alpha_l \ell\left(\boldsymbol{x}_k^{(i)}, \boldsymbol{u}_k^{(i)}, \boldsymbol{r}_k^{(i)}\right) \right.\\
+ & \left.\alpha_h p_x\left(h\left(\boldsymbol{x}_k^{(i)}, \mathbf{p}_{\mathbf{h}_{k}}^{(i)}\right)\right) + \alpha_g p_u\left(g\left(\boldsymbol{u}_k^{(i)}, \mathbf{p}_{\mathbf{g}_{k}}^{(i)}\right) \right) \right.\\
+ & \left.\alpha_N p_N\left(\boldsymbol{x}_N^{(i)}\right)\right)
\label{Lnc}
\end{aligned}
\end{equation}
with $m$ being the number of training data samples, and $\alpha_{l}, \alpha_{h}, \alpha_{g}, \alpha_{N}$ being the weighting coefficients. The first term in the RHS of \eqref{Lnc} represents the performance metric, which can be defined, e.g. as a reference tracking term \cite{drgovna2024learning}, 
$
\ell\left(\boldsymbol{x}^{(i)}_k, \boldsymbol{u}^{(i)}_k, \boldsymbol{r}^{(i)}_k\right)=\left\|\boldsymbol{x}^{(i)}_k-\boldsymbol{r}^{(i)}_k\right\|_2^2+\left\|\boldsymbol{u}^{(i)}_k\right\|_2^2
$.
The terminal penalty term \(p_N(\boldsymbol{x}^{(i)}_N)\) in the $\mathcal{L}_{NC}$ formulation is a standard MPC component designed to ensure recursive feasibility and closed-loop stability. It can be synthesized using established methods \cite{chen1998quasi, mayne2000survey, fagiano2013generalized, kohler2019nonlinear}.
The second, third and fourth terms in the RHS of \eqref{Lnc} represent the state, input and terminal constraints respectively. The state and input constraints can be written in the following general forms
$ p_x\left(h\left(\boldsymbol{x}^{(i)}_k, \mathbf{p}_{\mathbf{h}_k}\right)\right)=\frac{1}{n_h}\mu_{x}\left(h\left(\boldsymbol{x}^{(i)}_k, \mathbf{p}_{\mathbf{h}_k}\right)\right)$ and $p_u\left(g\left(\boldsymbol{u}^{(i)}_k, \mathbf{p}_{\mathbf{g}_k}\right)\right)=\frac{1}{n_g}\mu_{u}\left(g\left(\boldsymbol{u}^{(i)}_k, \mathbf{p}_{\mathbf{g}_k}\right)\right)$
where $\mu_{x}, \mu_{u}:\mathbb{R} \mapsto \mathbb{R}$ are the functions that aim to penalize violations of the constraints \eqref{eq:opt_problem_state} and \eqref{eq:opt_problem_input}.

There are typically two types of penalty functions. The first type has a less strict form. For example, the ReLU function $\mu(x)=\max(0,x)$ \cite{drgovna2024learning} and the GeLU function $\mu(x) = x\frac{1}{\sqrt{2 \pi}} \int_{-\infty}^x e^{-t^2 / 2} d t$ \cite{lee2023mathematical}. Because these functions are defined on the whole real line \(\mathbb{R}\) rather than exactly on the feasible set determined by the constraints, they implement \emph{soft} penalties rather than hard constraints—the larger the deviation, the more the penalty. The second type, in contrast, is strictly defined on the domain confined by the constraints. A
canonical example is the log barrier \cite{boyd2004convex}. For example, to let the system state satisfy the constraint \eqref{eq:opt_problem_state}, the function can be chosen as $\mu(x) = - \log (-x+\epsilon)$, where $\epsilon>0$ is a sufficiently small margin. Each type has clear advantages and disadvantages. To improve the feasibility of the learning-based control policy, we propose to employ both types of penalties in the following parts of the paper.

A fundamental consequence of adopting differentiable closed-loop dynamics, stage costs, constraint mappings, and terminal penalty functions is that the resulting optimization problem becomes amenable to gradient-based training via automatic differentiation, namely, backpropagation through time (BPTT)~\cite{Werbos1990,puskorius1994truncated}. More precisely, upon representing problem~(1) as a computational graph, repeated application of the chain rule yields the gradient of the DPC objective $\mathcal{L}_{NC}$ with respect to the policy parameters $\mathbf{W}$.

For illustration and notational simplicity, consider the scalar-input and single-stage setting $m=1$ and $N=1$ and write $\boldsymbol{x}_{0}^{(1)}, \boldsymbol{u}_{0}^{(1)}, \boldsymbol{r}_{0}^{(1)}$ as $\boldsymbol{x}, \boldsymbol{u}, \boldsymbol{r}$. The corresponding gradient expression is given by
\begin{equation}
\begin{aligned}
& \nabla_{\mathbf{W}} \mathcal{L}_{NC} = \frac{\partial \ell(\boldsymbol{x}, \boldsymbol{u}, \boldsymbol{r})}{\partial \mathbf{W}}
+
\frac{\partial p_x\!\left(h\!\left(\boldsymbol{x}, \boldsymbol{p}_{\boldsymbol{h}}\right)\right)}{\partial \mathbf{W}}\\
+ &
\frac{\partial p_u\!\left(g\!\left(\boldsymbol{u}, \boldsymbol{p}_{\boldsymbol{g}}\right)\right)}{\partial \mathbf{W}}
+
\frac{\partial p_N(\boldsymbol{x}_N)}{\partial \mathbf{W}}
\\[0.3em]
= &
\frac{\partial \ell(\boldsymbol{x}, \boldsymbol{u}, \boldsymbol{r})}{\partial \boldsymbol{x}}
\frac{\partial \boldsymbol{x}}{\partial \boldsymbol{u}}
\frac{\partial \boldsymbol{u}}{\partial \mathbf{W}}
+
\frac{\partial \ell(\boldsymbol{x}, \boldsymbol{u}, \boldsymbol{r})}{\partial \boldsymbol{u}}
\frac{\partial \boldsymbol{u}}{\partial \mathbf{W}}
\\
+ &
\frac{\partial p_x\!\left(h\!\left(\boldsymbol{x}, \boldsymbol{p}_{\boldsymbol{h}}\right)\right)}{\partial \boldsymbol{x}}
\frac{\partial \boldsymbol{x}}{\partial \boldsymbol{u}}
\frac{\partial \boldsymbol{u}}{\partial \mathbf{W}}
+
\frac{\partial p_u\!\left(g\!\left(\boldsymbol{u}, \boldsymbol{p}_{\boldsymbol{g}}\right)\right)}{\partial \boldsymbol{u}}
\frac{\partial \boldsymbol{u}}{\partial \mathbf{W}}
\\
+ &
\frac{\partial p_N(\boldsymbol{x}_N)}{\partial \boldsymbol{x}}
\frac{\partial \boldsymbol{x}}{\partial \boldsymbol{u}}
\frac{\partial \boldsymbol{u}}{\partial \mathbf{W}} .
\end{aligned}
\label{eq:DPC_gradient}
\end{equation}

The explicit availability of the gradient representation in~\eqref{eq:DPC_gradient} permits the application of first-order stochastic optimization methods, including stochastic gradient descent and its adaptive variants such as AdamW~\cite{loshchilov2017decoupled}, for the numerical solution of the parametric optimal control problem~\eqref{eq:opt_problem}. In practice, these derivatives may be evaluated efficiently through modern auto-differentiation frameworks, for instance PyTorch \cite{paszke2019pytorch}.

\section{Feasibility Analyses of DPC}

The standard DPC algorithm has shown significant advantages in scalability and inference time. However, a rigorous feasibility proof of DPC remains a significant challenge. In this section, we will show that the feasibility of the DPC can be proved rigorously in topological terms, with a sketch provided in Figure \ref{fig1} to clarify the analysis. We first denote the control input and the relative system state for sample initialization $i$ at time step $k$ by the optimization problem \eqref{eq:opt_problem} to be $\left(\boldsymbol{u}^{*(i)}_{k}, \boldsymbol{x}^{*(i)}_{k}\right)$.
\begin{figure}[htbp]
\centering
\includegraphics[scale=0.6]{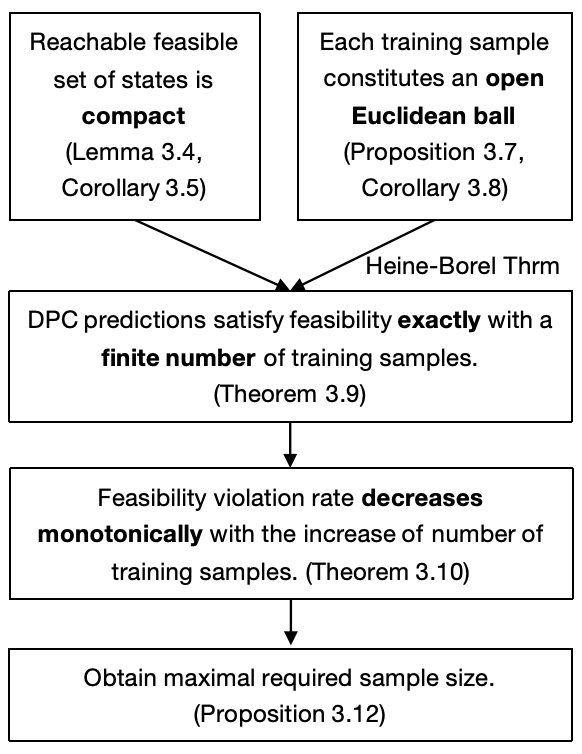}
\caption{A sketch of the proofs for feasibility analyses of DPC.}
\label{fig1}
\end{figure}

In this paper, we consider the feasibility of the DPC in the manner of CBF \cite{ames2019control}. The CBF guarantees feasibility by defining a forward invariant subset of the state space, ensuring that if a system starts within a feasible region, the control law will always keep it there. For our discrete-time setting, we define the discrete-time zeroing CBF update condition
\begin{equation}
\begin{aligned}
& r_{k+1, l}(\boldsymbol{x}^{(i)}_k)
:= \\
& h_{l}\!\left(\boldsymbol{x}^{(i)}_{k+1}, \mathbf{p}_{\mathbf{h}_{k+1}}\right)
-(1-\eta)h_{l}\!\left(\boldsymbol{x}^{(i)}_{k}, \mathbf{p}_{\mathbf{h}_{k}}\right) > 0
\label{eq:CBF-update}
\end{aligned}
\end{equation}
for each safety inequality $h_{l}$, where $\eta\in(0,1]$. Crucially, we retain the exact identity mapping for $h_{l}(\boldsymbol{x})$ to preserve the original geometry of the safe set (defined by $h_{l}(\boldsymbol{x})\geqslant 0$). Furthermore, the following results apply to each constraint $l$; for brevity, the subscript $l$ is omitted where it does not cause confusion.

Let all $\boldsymbol{x}_{k}, 0 \leqslant k \leqslant N$ in this section refer to the system state passed through the \emph{state} safety-proxy loss \eqref{eq:safety-proxy-loss}. We first have the following assumption.

\begin{assumption}[Lipschitz assumption of the safety set]
\label{assmp:41}
Denote the safety set 
$$
\mathcal{X}(k):=\left\{\boldsymbol{x}_{k} \in \mathbb{R}^{n_{x}}: h(\boldsymbol{x}_k, \mathbf{p}_{\mathbf{h}_k}) \geqslant 0\right\}
$$
and
$$
\mathcal{W}(k):=\left\{\boldsymbol{u}_{k} \in \mathbb{R}^{n_{u}}: g(\boldsymbol{x}_k, \mathbf{p}_{\mathbf{g}_k}) \geqslant 0\right\}.
$$
The function $\boldsymbol{\mathcal{F}}: \mathbb{R}^{n_{x}} \times \mathbb{R}^{n_{u}} \rightarrow \mathbb{R}^{n_{x}}$ is locally Lipschitz continuous on compact sets $\mathscr{X} \subset \mathbb{R}^{n_{x}}$ and $\mathscr{W} \subset \mathbb{R}^{n_{u}}$, for which $\mathcal{X}(k) \subset \mathscr{X}, \mathcal{W}(k) \subset \mathscr{W}, \forall k \in \mathbb{N}$, with Lipschitz coefficients $L_{\boldsymbol{\mathcal{F}},x}, L_{\boldsymbol{\mathcal{F}},u} \in \mathbb{R}_{+}$, where $\mathbb{R}_{+}$ denotes the set of positive real numbers.
\end{assumption}

\begin{assumption}[CBF condition satisfaction]\label{assmp:40}
For any initialization of $\boldsymbol{x}_{0}$ in the feasible safety set, there always exists a feasible control trajectory satisfying the CBF condition \eqref{eq:CBF-update}.
\end{assumption}
\begin{remark}
We will first concentrate on the feasibility proofs by this assumption and will focus on algorithm design for achieving this condition in the subsequent section.
\end{remark}

\begin{assumption}[Lipschitz continuity of $h$]
\label{assmp:42}
There exists a neighborhood of the point
$(\boldsymbol{x}_k^\star,\mathbf{p}_{\mathbf{h}_k})$
and constants $L_{h,x}>0$ such that for all 
$(\boldsymbol{x}_{k,1},\mathbf{p}_{\mathbf{h}_k})$ and $(\boldsymbol{x}_{k,2},\mathbf{p}_{\mathbf{h}_k})$ in this neighborhood,
\[
\begin{aligned}
|h(\boldsymbol{x}_{k,1},\mathbf{p}_{\mathbf{h}_k}))-h(\boldsymbol{x}_{k,2},\mathbf{p}_{\mathbf{h}_k}))|
\leqslant L_{h,x,k}\|\boldsymbol{x}_{k,1}-\boldsymbol{x}_{k,2}\|.
\end{aligned}
\]
Here, we note that $\mathbf{p}_{\mathbf{h}_k}$ are deterministic parameters that do not depend on the state $\boldsymbol{x}_k$.
\end{assumption}

\begin{assumption}[Lipschitz continuity of the neural policy]
\label{assmp:43}
Assume that the learned neural control policy $\pi_\mathbf{W}\left(\boldsymbol{x}_k, \boldsymbol{\xi}_k\right)$ is locally Lipschitz continuous with respect to the state $\boldsymbol{x}_k$ on the compact set $\mathscr{X}$, with a Lipschitz constant $L_\pi>0$. That is, $\left\|\pi_\mathbf{W}\left(\boldsymbol{x}_{k, 1}, \boldsymbol{\xi}_{k}\right)-\pi_{\mathbf{W}}\left(\boldsymbol{x}_{k, 2},\boldsymbol{\xi}_{k}\right)\right\| \leqslant L_\pi\left\|
\boldsymbol{x}_{k, 1}-\boldsymbol{x}_{k, 2}\right\|$.
\end{assumption}

\begin{remark}
    It is worth emphasizing that our feasibility guarantees do not rely on differentiability assumptions. Indeed, even if the true optimal control mapping is non-differentiable—and a differentiable function approximator (e.g., a multi-layer perceptron) is employed in the DPC, the following results remain valid. Crucially, the learned control policy need not be close to the true optimal policy in any specific norm or metric. This minimal reliance on structural assumptions distinguishes our approach from existing methods and enhances its practical applicability in real-world scenarios.
\end{remark}

Assumptions \ref{assmp:41} and \ref{assmp:42} have a similar form as the typical Lipschitz continuity for the model predictive control of ODE control problem. By this assumption, we can then prove the following lemma.

\begin{lemma}
\label{lemma:43}
Under Assumption~\ref{assmp:41}, if $\mathcal{X}(0)$ is compact, then
each subsequent set $\mathcal{X}(k)$, $1 \leqslant k \leqslant N$, is
compact.
\end{lemma}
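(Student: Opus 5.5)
The plan is to read $\mathcal{X}(k)$ for $k\geqslant 1$ as the set of states actually reachable at step $k$ under the closed-loop dynamics \eqref{syseq}. With $\boldsymbol{u}_k=\pi_{\mathbf{W}}(\boldsymbol{x}_k,\boldsymbol{\xi}_k)$, these are exactly the states passed through the state safety-proxy loss. Its compactness is then propagated forward by induction on $k$, using that continuous images of compact sets are compact. The base case is the hypothesis that $\mathcal{X}(0)$ is compact. For the inductive step, suppose $\mathcal{X}(k)$ is compact with $\mathcal{X}(k)\subset\mathscr{X}$. Then
\begin{equation*}
\mathcal{X}(k+1) = \boldsymbol{\mathcal{F}}\bigl(\mathcal{G}_k\bigr),
\qquad
\mathcal{G}_k := \bigl\{(\boldsymbol{x}_k,\boldsymbol{u}_k): \boldsymbol{x}_k\in\mathcal{X}(k),\ \boldsymbol{u}_k\in\mathcal{W}(k)\ \text{admissible at } \boldsymbol{x}_k\bigr\}.
\end{equation*}
Compactness of $\mathcal{X}(k+1)$ then follows from two facts: $\mathcal{G}_k$ is compact, and $\boldsymbol{\mathcal{F}}$ is continuous on it.

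For the first fact, observe that $\mathcal{G}_k\subset\mathcal{X}(k)\times\mathscr{W}$. It is bounded, since both factors are bounded: $\mathscr{W}$ is compact by Assumption~\ref{assmp:41}. It is closed, because it is cut out by the constraint $g(\cdot,\mathbf{p}_{\mathbf{g}_k})\geqslant 0$, or, in the closed-loop reading, because it is the graph of the continuous map $\pi_{\mathbf{W}}(\cdot,\boldsymbol{\xi}_k)$ (Assumption~\ref{assmp:43}) over the compact set $\mathcal{X}(k)$. Heine--Borel in $\mathbb{R}^{n_x}\times\mathbb{R}^{n_u}$ then gives compactness. In the graph version one can bypass this entirely: the map $\boldsymbol{x}\mapsto(\boldsymbol{x},\pi_{\mathbf{W}}(\boldsymbol{x},\boldsymbol{\xi}_k))$ is continuous, so $\mathcal{G}_k$ is compact as a continuous image.

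For the second fact, Assumption~\ref{assmp:41} states that $\boldsymbol{\mathcal{F}}$ is Lipschitz on $\mathscr{X}\times\mathscr{W}$ with constants $L_{\boldsymbol{\mathcal{F}},x},L_{\boldsymbol{\mathcal{F}},u}$, so
\begin{equation*}
\|\boldsymbol{\mathcal{F}}(\boldsymbol{x}_1,\boldsymbol{u}_1)-\boldsymbol{\mathcal{F}}(\boldsymbol{x}_2,\boldsymbol{u}_2)\|\leqslant L_{\boldsymbol{\mathcal{F}},x}\|\boldsymbol{x}_1-\boldsymbol{x}_2\|+L_{\boldsymbol{\mathcal{F}},u}\|\boldsymbol{u}_1-\boldsymbol{u}_2\|.
\end{equation*}
This makes $\boldsymbol{\mathcal{F}}$ continuous on $\mathcal{G}_k$. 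Its image $\mathcal{X}(k+1)$ is therefore compact. It is also contained in $\mathscr{X}$ by Assumption~\ref{assmp:41}, which closes the induction and lets the argument be iterated up to $k=N$.

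The main obstacle is interpretive rather than technical. Read literally as a sublevel set $\{h(\boldsymbol{x},\mathbf{p}_{\mathbf{h}_k})\geqslant0\}$, $\mathcal{X}(k)$ is closed whenever $h$ is continuous (Assumption~\ref{assmp:42}), and bounded because it lies inside the compact set $\mathscr{X}$. Under that reading the lemma follows immediately from Heine--Borel, with no induction needed. I would state this observation first as a safeguard, and then give the propagation argument above, which covers the dynamical reading the section intends. The point to check carefully is that every iterate stays inside $\mathscr{X}\times\mathscr{W}$, where the Lipschitz bound actually holds.
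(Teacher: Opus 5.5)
Your ``safeguard'' argument is the paper's proof, and it should be your proof rather than a fallback. The paper writes $\mathcal{X}(k)$ as the preimage of a closed half-line under the continuous map $h(\cdot,\mathbf{p}_{\mathbf{h}_k})$, so it is closed. It then intersects this set with the compact $\mathscr{X}$ from Assumption~\ref{assmp:41} and concludes compactness for each $k$ separately. Neither the dynamics nor compactness of $\mathcal{X}(0)$ is ever used, so that hypothesis is redundant. One caveat applies to you and to the paper alike. Continuity of $h(\cdot,\mathbf{p}_{\mathbf{h}_k})$ on all of $\mathbb{R}^{n_x}$ does not follow from Assumption~\ref{assmp:42}, which only gives a local Lipschitz bound near $\boldsymbol{x}_k^\star$. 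The paper simply takes it ``by construction,'' so state it as a standing hypothesis rather than cite Assumption~\ref{assmp:42}.

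Your main propagation argument, however, does not prove this lemma. Assumption~\ref{assmp:41} defines $\mathcal{X}(k)$ as the sublevel set $\{\boldsymbol{x}: h(\boldsymbol{x},\mathbf{p}_{\mathbf{h}_k})\geqslant 0\}$, so the identity $\mathcal{X}(k+1)=\boldsymbol{\mathcal{F}}(\mathcal{G}_k)$ is false in general. The paper keeps reachability separate on purpose. Its remark right after the lemma says the lemma ignores reachability, and the next corollary handles it via $\mathcal{R}(k+1)=\boldsymbol{\mathcal{F}}(\mathcal{R}(k),\mathcal{U})$ and $\mathcal{X}_r(k)=\mathcal{R}(k)\cap\mathcal{X}(k)$. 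That corollary is essentially what your induction proves.

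Even for that corollary, your step ``contained in $\mathscr{X}$ by Assumption~\ref{assmp:41}'' is not justified, because the assumption bounds the sublevel sets, not the reachable images. To keep $\boldsymbol{\mathcal{F}}$ inside its Lipschitz domain you would need one of two things:
\begin{itemize}
\item an extra assumption that the closed loop never leaves $\mathscr{X}$, which is a form of the invariance the later sections try to establish;
\item global continuity of $\boldsymbol{\mathcal{F}}$, which the paper's corollary tacitly uses.
\end{itemize}
Likewise, closedness of $\mathcal{G}_k$ via $g\geqslant 0$ needs continuity of $g$, which is not assumed; only your graph-of-$\pi_{\mathbf{W}}$ version is self-contained.
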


\begin{proof}
Fix $k \in \{0,1,\dots,N\}$ and define the function
\[
\zeta_k(\boldsymbol{x}) := h\left(\boldsymbol{x}_k, \mathbf{p}_{\mathbf{h}_k}\right), \quad
\boldsymbol{x} \in \mathbb{R}^{n_x}.
\]
By construction, $h(\cdot, \mathbf{p}_{\mathbf{h}_k})$ is continuous in $\boldsymbol{x}$. Hence $\zeta_k$ is continuous on
$\mathbb{R}^{n_x}$.

The safety set at time $k$ can be written as
\[
\mathcal{X}(k)
= \{\boldsymbol{x} \in \mathbb{R}^{n_x} : \zeta_k(\boldsymbol{x}) \leqslant 0\}
= \zeta_k^{-1}((-\infty,0]).
\]
Since $(-\infty,0]$ is a closed subset of $\mathbb{R}$ and
$\zeta_k$ is continuous, the preimage
$\zeta_k^{-1}((-\infty,0])$ is closed in $\mathbb{R}^{n_x}$.

Moreover, by Assumption~\ref{assmp:41} there exists a compact set
$\mathscr{X} \subset \mathbb{R}^{n_x}$ such that
$\mathcal{X}(k) \subset \mathscr{X}$ for all $k$. Therefore
we can equivalently write
\[
\mathcal{X}(k)
= \zeta_k^{-1}((-\infty,0]) \cap \mathscr{X},
\]
which is an intersection of a closed set with a compact set.
Thus $\mathcal{X}(k)$ is closed and bounded, hence compact.

This argument applies to each $k=0,1,\dots,N$, so in
particular $\mathcal{X}(k)$ is compact for all
$1 \leqslant k \leqslant N$.
\end{proof}

\begin{remark}
This lemma proves that the nominal safety set at each time step $k$ is compact. However, the proof doesn't take into account the \textbf{reachability} of $\mathcal{X}(k)$, which means that a subset of $\mathcal{X}(K)$ is not achievable by the $h(\boldsymbol{x}_k, \mathbf{p}_{\mathbf{h}_k}) \leqslant 0$ in the previous steps $0 \leqslant k \leqslant K$. Denote the state sequence by $\boldsymbol{X} := (\boldsymbol{x}_0,\dots,\boldsymbol{x}_N)$, and we have the following corollary.
\end{remark}

\begin{corollary}
    Denote the domain of $\boldsymbol{u}_{k} \in \mathbb{R}^{n_{u}}$ satisfying \eqref{eq:opt_problem_input} to be a compact set $\mathcal{U}$. Define the reachable sets $\mathcal{R}(0) :=\mathcal{X}(0)$ and $\mathcal{R}(k+1) :=\left\{\boldsymbol{\mathcal{F}}\left(\boldsymbol{x}_k, \boldsymbol{u}_k\right) + \mathbf{W}_{k}: \boldsymbol{x}_k \in \mathcal{R}(k), \boldsymbol{u}_k \in \mathcal{U} \right\}$ for time step $k$. The reachable safety set $\mathcal{X}_{r}(k)$ can then be written as
    \begin{equation}
    \label{Xrk}
    \begin{aligned}
    & \mathcal{X}_{r}(k):= \\
    & \left\{\boldsymbol{x}_{k} \in \mathcal{R}(k): h\left(\boldsymbol{x}_k, \mathbf{p}_{\mathbf{h}_k}\right) \geqslant 0\right\}=\mathcal{R}(k) \cap \mathcal{X}(k).
    \end{aligned}
    \end{equation}
    We have $\mathcal{X}_{r}(k)$ is compact for each time step $0 \leqslant k \leqslant N$. Hence, the set of reachable safety sequences $\boldsymbol{\mathcal{X}}_{r}:= \left\{\boldsymbol{X}=(\boldsymbol{x}_0,\dots,\boldsymbol{x}_N) \middle| \boldsymbol{x}_k \in \mathcal{X}_{r}(k) \right\} \in \mathbb{R}^{(N+1)n_x}$ is compact.
\end{corollary}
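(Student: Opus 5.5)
We argue by induction on $k$ that each reachable set $\mathcal{R}(k)$ is compact. The compactness of $\mathcal{X}_r(k)$ and of the sequence set $\boldsymbol{\mathcal{X}}_r$ then follows from elementary topology.

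For the base case, $\mathcal{R}(0)=\mathcal{X}(0)$ is compact by hypothesis of Lemma~\ref{lemma:43}. For the inductive step, assume $\mathcal{R}(k)$ is compact. Then $\mathcal{R}(k)\times\mathcal{U}$ is compact in $\mathbb{R}^{n_x}\times\mathbb{R}^{n_u}$ by Tychonoff (finite products suffice here). By Assumption~\ref{assmp:41}, $\boldsymbol{\mathcal{F}}$ is locally Lipschitz, hence continuous, on the relevant compact sets. Its image $\boldsymbol{\mathcal{F}}(\mathcal{R}(k)\times\mathcal{U})$ is therefore compact.

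The additive term $\mathbf{W}_k$ needs a convention. I would treat it as a fixed vector, or more generally as ranging over a compact disturbance set $\mathbb{W}_k$. In the fixed case, translation is a homeomorphism, so $\mathcal{R}(k+1)$ is compact. In the set-valued case, $\mathcal{R}(k+1)$ is a Minkowski sum $A+\mathbb{W}_k$ of two compact sets. This is the image of the compact set $A\times\mathbb{W}_k$ under the continuous map $(a,w)\mapsto a+w$, so it is again compact.

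I expect the main obstacle to be exactly this step: pinning down what $\mathbf{W}_k$ denotes, since the symbol clashes with the policy weights $\mathbf{W}$. A second subtlety is that Lipschitz continuity is only assumed on $\mathscr{X}\times\mathscr{W}$. I need either $\mathcal{R}(k)\subset\mathscr{X}$ and $\mathcal{U}\subset\mathscr{W}$, or simply global continuity of $\boldsymbol{\mathcal{F}}$, which holds for a Runge--Kutta discretization of a continuous $\mathbf{f}$. If neither is available, I would redefine $\mathcal{R}(k+1)$ using only $\boldsymbol{x}_k\in\mathcal{X}_r(k)\subset\mathscr{X}$, which keeps the argument inside the assumed compact domain.

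Given compactness of $\mathcal{R}(k)$, the argument of Lemma~\ref{lemma:43} shows that $\mathcal{X}(k)$ is closed, being the preimage of a closed set under the continuous map $h(\cdot,\mathbf{p}_{\mathbf{h}_k})$. Hence $\mathcal{X}_r(k)=\mathcal{R}(k)\cap\mathcal{X}(k)$ is a closed subset of a compact set, and so is compact; this holds even without using the bound $\mathcal{X}(k)\subset\mathscr{X}$. Finally, $\boldsymbol{\mathcal{X}}_r=\prod_{k=0}^{N}\mathcal{X}_r(k)\subset\mathbb{R}^{(N+1)n_x}$ is a finite product of compact sets. It is compact by Tychonoff, or directly by Heine--Borel, since it is closed and bounded with the bound $\max_k\sup_{\mathcal{X}_r(k)}\|\boldsymbol{x}\|$.
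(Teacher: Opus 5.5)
Your proof is correct and follows the paper's argument. Both show by induction that $\mathcal{R}(k)$ is compact as the continuous image of the compact set $\mathcal{R}(k)\times\mathcal{U}$, then that $\mathcal{X}_r(k)=\mathcal{R}(k)\cap\mathcal{X}(k)$ is compact, and finally that $\boldsymbol{\mathcal{X}}_r$ is compact as a finite product. Your explicit handling of $\mathbf{W}_k$ (which the paper silently drops by writing $\mathcal{R}(1)=\boldsymbol{\mathcal{F}}(\mathcal{R}(0),\mathcal{U})$), your care about where $\boldsymbol{\mathcal{F}}$ is continuous, and your observation that only closedness of $\mathcal{X}(k)$ is needed (rather than the full Lemma~\ref{lemma:43}) are worthwhile refinements of that argument.
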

\begin{proof}
If $\mathcal{R}(k)$ is compact and $\mathcal{U}$ is compact, 
then their Cartesian product $\mathcal{R}(k) \times \mathcal{U}$ is also compact.
Since the mapping
\[
(\boldsymbol{x}_{0}, \boldsymbol{u}_{0}) \mapsto \boldsymbol{\mathcal{F}}(\boldsymbol{x}_{0}, \boldsymbol{u}_{0})
\]
is continuous, it maps compact sets to compact sets. We note that $\mathcal{R}(k)$ are finite compact sets for $1 \leqslant k \leqslant N$). Therefore,
\[
\mathcal{R}(1) = \boldsymbol{\mathcal{F}}(\mathcal{R}(0), \mathcal{U})
\]
is compact. By \eqref{Xrk} and Lemma \ref{lemma:43}, $\mathcal{X}_{r}(1) = \mathcal{R}(1) \cap \mathcal{X}(1)$ is also compact. Similarly, we could prove $\mathcal{X}_{r}(k), 2 \leqslant k \leqslant N$ are compact sets. Therefore, $\boldsymbol{\mathcal{X}}_{r}$ is also compact, which completes the proof.
\end{proof}

In the previous results of this section, we have proved the topological property of the set of all feasible system state trajectories. Next, we would seek what property each single data sample has by our proposed algorithm. In the following lemma, we will first prove the Lipschitz continuity of the function $r_{k+1}(\boldsymbol{x}_k)$.

\begin{lemma}
\label{lem:r-lip}
Let Assumption~\ref{assmp:41}, Assumption~\ref{assmp:42}, and the assumption on the Lipschitz continuity of the neural policy hold. Then
$r_{k+1}(\boldsymbol{x}_k)$ is locally Lipschitz in $\boldsymbol{x}_k$
in a neighborhood of $\boldsymbol{x}_k^\star$. The Lipschitz coefficient is
\begin{equation}
\label{LipschitzCoeff}
    L := L_{h,x,k+1}(L_{\boldsymbol{\mathcal{F}},x} + L_{\boldsymbol{\mathcal{F}},u}L_{\pi})
      +(1-\eta)L_{h,x,k}
\end{equation}
\end{lemma}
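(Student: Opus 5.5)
The plan is to write $r_{k+1}$ explicitly as a composition of Lipschitz maps and bound the difference term by term with the triangle inequality. With the closed-loop dynamics \eqref{eq:opt_problem_dyn}--\eqref{eq:opt_problem_policy}, the successor state is $\boldsymbol{x}_{k+1}=\boldsymbol{\mathcal{F}}(\boldsymbol{x}_k,\pi_{\mathbf{W}}(\boldsymbol{x}_k,\boldsymbol{\xi}_k))$. Hence
\[
r_{k+1}(\boldsymbol{x}_k)=h\bigl(\boldsymbol{\mathcal{F}}(\boldsymbol{x}_k,\pi_{\mathbf{W}}(\boldsymbol{x}_k,\boldsymbol{\xi}_k)),\mathbf{p}_{\mathbf{h}_{k+1}}\bigr)-(1-\eta)h(\boldsymbol{x}_k,\mathbf{p}_{\mathbf{h}_k}).
\]
The parameters $\mathbf{p}_{\mathbf{h}_k}$, $\mathbf{p}_{\mathbf{h}_{k+1}}$ and $\boldsymbol{\xi}_k$ are fixed and independent of the state, as noted in Assumption~\ref{assmp:42}. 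So only the $\boldsymbol{x}_k$-dependence needs to be tracked.

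Take two points $\boldsymbol{x}_{k,1},\boldsymbol{x}_{k,2}$ in a neighborhood of $\boldsymbol{x}_k^\star$. Write $\boldsymbol{u}_{k,j}=\pi_{\mathbf{W}}(\boldsymbol{x}_{k,j},\boldsymbol{\xi}_k)$ and $\boldsymbol{x}_{k+1,j}=\boldsymbol{\mathcal{F}}(\boldsymbol{x}_{k,j},\boldsymbol{u}_{k,j})$. The triangle inequality gives
\[
|r_{k+1}(\boldsymbol{x}_{k,1})-r_{k+1}(\boldsymbol{x}_{k,2})|\leqslant |h(\boldsymbol{x}_{k+1,1},\cdot)-h(\boldsymbol{x}_{k+1,2},\cdot)|+(1-\eta)|h(\boldsymbol{x}_{k,1},\cdot)-h(\boldsymbol{x}_{k,2},\cdot)|,
\]
where $1-\eta\geqslant 0$ because $\eta\in(0,1]$. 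Assumption~\ref{assmp:42} at step $k$ bounds the second term by $(1-\eta)L_{h,x,k}\|\boldsymbol{x}_{k,1}-\boldsymbol{x}_{k,2}\|$. Assumption~\ref{assmp:42} at step $k+1$ bounds the first term by $L_{h,x,k+1}\|\boldsymbol{x}_{k+1,1}-\boldsymbol{x}_{k+1,2}\|$. For the successor states, Assumption~\ref{assmp:41}, read as separate Lipschitz continuity in each argument, gives
\[
\|\boldsymbol{\mathcal{F}}(\boldsymbol{x}_{k,1},\boldsymbol{u}_{k,1})-\boldsymbol{\mathcal{F}}(\boldsymbol{x}_{k,2},\boldsymbol{u}_{k,2})\|\leqslant L_{\boldsymbol{\mathcal{F}},x}\|\boldsymbol{x}_{k,1}-\boldsymbol{x}_{k,2}\|+L_{\boldsymbol{\mathcal{F}},u}\|\boldsymbol{u}_{k,1}-\boldsymbol{u}_{k,2}\|.
\]
This follows by inserting the intermediate point $\boldsymbol{\mathcal{F}}(\boldsymbol{x}_{k,2},\boldsymbol{u}_{k,1})$. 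Assumption~\ref{assmp:43} then gives $\|\boldsymbol{u}_{k,1}-\boldsymbol{u}_{k,2}\|\leqslant L_\pi\|\boldsymbol{x}_{k,1}-\boldsymbol{x}_{k,2}\|$. Collecting the terms yields exactly the constant $L$ in \eqref{LipschitzCoeff}.

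The main obstacle is making sure every point falls inside the regions where the local Lipschitz constants apply. The pairs $(\boldsymbol{x}_{k,j},\boldsymbol{u}_{k,j})$ must lie in $\mathscr{X}\times\mathscr{W}$, where Assumption~\ref{assmp:41} holds. The successor states $\boldsymbol{x}_{k+1,j}$ must lie in the neighborhood of $\boldsymbol{x}_{k+1}^\star$ on which Assumption~\ref{assmp:42} holds at step $k+1$. I will handle this by continuity. The map $\boldsymbol{x}_k\mapsto\boldsymbol{\mathcal{F}}(\boldsymbol{x}_k,\pi_{\mathbf{W}}(\boldsymbol{x}_k,\boldsymbol{\xi}_k))$ is continuous, since it is a composition of Lipschitz maps. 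Its preimage of the step-$(k+1)$ neighborhood of $\boldsymbol{x}_{k+1}^\star=\boldsymbol{\mathcal{F}}(\boldsymbol{x}_k^\star,\boldsymbol{u}_k^\star)$ is therefore an open set containing $\boldsymbol{x}_k^\star$. I intersect this preimage with the step-$k$ neighborhood from Assumption~\ref{assmp:42} and, taking the Lipschitz regions of Assumptions~\ref{assmp:41} and~\ref{assmp:43} to contain a neighborhood of $\boldsymbol{x}_k^\star$ and $\boldsymbol{u}_k^\star$, with those regions as well. The intersection is a neighborhood of $\boldsymbol{x}_k^\star$ on which all the estimates above hold simultaneously, and this proves the local Lipschitz claim.
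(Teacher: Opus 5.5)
Your proposal is correct and follows the paper's proof essentially line for line: the same triangle-inequality split of $r_{k+1}$ into the successor-state term and the $(1-\eta)$-weighted current-state term, followed by the chain of Lipschitz bounds through $h$, $\boldsymbol{\mathcal{F}}$ and $\pi_{\mathbf{W}}$, which gives exactly $L$. Your additional bookkeeping is left implicit in the paper and only makes the argument more careful: you note $1-\eta\geqslant 0$, you derive the joint bound on $\boldsymbol{\mathcal{F}}$ via the intermediate point, and you shrink the neighborhood by a preimage argument so that the successor states land in the step-$(k+1)$ Lipschitz neighborhood of $\boldsymbol{x}_{k+1}^\star$.
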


\begin{proof}
Take two points $\boldsymbol{x}_{k, 1}$ and $\boldsymbol{x}_{k, 2}$ in a neighborhood
of $\boldsymbol{x}_k^\star$. Let the corresponding control inputs generated by the learned neural policy be $\boldsymbol{u}_{k, 1} = \pi_{\mathbf{W}}(\boldsymbol{x}_{k, 1}, \boldsymbol{\xi}_{k})$ and $\boldsymbol{u}_{k, 2} = \pi_{\mathbf{W}}(\boldsymbol{x}_{k, 2}, \boldsymbol{\xi}_{k})$. Then
\[
\begin{aligned}
&|r_{k+1}(\boldsymbol{x}_{k, 1})-r_{k+1}(\boldsymbol{x}_{k, 2})| \\
&\leqslant 
\bigl|h \left(\boldsymbol{\mathcal{F}}(\boldsymbol{x}_{k, 1},\boldsymbol{u}_{k, 1}),\mathbf{p}_{\mathbf{h}_{k+1}} \right)
- h \left(\boldsymbol{\mathcal{F}}(\boldsymbol{x}_{k, 2},\boldsymbol{u}_{k, 2}),\mathbf{p}_{\mathbf{h}_{k+1}} \right) \bigr| \\
&\quad
+ (1-\eta)\bigl|h(\boldsymbol{x}_{k, 1},\mathbf{p}_{\mathbf{h}_k})-h(\boldsymbol{x}_{k, 2},\mathbf{p}_{\mathbf{h}_k})\bigr|.
\end{aligned}
\]

For the first term, by the Lipschitz continuity of $h$ and
$\mathbf{p}_{\mathbf{h}_{k+1}}(\cdot)$ (Assumption~\ref{assmp:42}), the local Lipschitz continuity of
$\boldsymbol{\mathcal{F}}$ with respect to both state and control input (Assumption~\ref{assmp:41}), and the Lipschitz continuity of the policy network with constant $L_{\pi}$ (Assumption~\ref{assmp:43}), we have
\[
\begin{aligned}
&\bigl|h\left(\boldsymbol{\mathcal{F}}(\boldsymbol{x}_{k, 1},\boldsymbol{u}_{k, 1}),\mathbf{p}_{\mathbf{h}_{k+1}} \right)
- h \left(\boldsymbol{\mathcal{F}}(\boldsymbol{x}_{k, 2},\boldsymbol{u}_{k, 2}),\mathbf{p}_{\mathbf{h}_{k+1}} \right)\bigr| \\
&\leqslant L_{h,x,k+1}\|\boldsymbol{\mathcal{F}}(\boldsymbol{x}_{k, 1},\boldsymbol{u}_{k, 1})
- \boldsymbol{\mathcal{F}}(\boldsymbol{x}_{k, 2},\boldsymbol{u}_{k, 2})\|\\
&\leqslant L_{h,x,k+1} \Big( L_{\boldsymbol{\mathcal{F}},x}\|\boldsymbol{x}_{k, 1}-\boldsymbol{x}_{k, 2}\| + L_{\boldsymbol{\mathcal{F}},u}\|\boldsymbol{u}_{k, 1}-\boldsymbol{u}_{k, 2}\| \Big)\\
&\leqslant L_{h,x,k+1} \Big( L_{\boldsymbol{\mathcal{F}},x}\|\boldsymbol{x}_{k, 1}-\boldsymbol{x}_{k, 2}\| + L_{\boldsymbol{\mathcal{F}},u} L_{\pi}\|\boldsymbol{x}_{k, 1}-\boldsymbol{x}_{k, 2}\| \Big)\\
&= L_{h,x,k+1} (L_{\boldsymbol{\mathcal{F}},x} + L_{\boldsymbol{\mathcal{F}},u}L_{\pi}) \|\boldsymbol{x}_{k, 1}-\boldsymbol{x}_{k, 2}\|.
\end{aligned}
\]

For the second term, using the Lipschitz continuity of $h$ in the first argument, we obtain
\[
\bigl|h(\boldsymbol{x}_{k, 1},\mathbf{p}_{\mathbf{h}_k})-h(\boldsymbol{x}_{k, 2},\mathbf{p}_{\mathbf{h}_k})\bigr|
\leqslant L_{h,x,k}\|\boldsymbol{x}_{k, 1}-\boldsymbol{x}_{k, 2}\|.
\]

Combining the above inequalities yields
\[
|r_{k+1}(\boldsymbol{x}_{k,1})-r_{k+1}(\boldsymbol{x}_{k,2})|
\leqslant L \|\boldsymbol{x}_{k,1}-\boldsymbol{x}_{k,2}\|,
\]
where
\[
L := L_{h,x,k+1}(L_{\boldsymbol{\mathcal{F}},x} + L_{\boldsymbol{\mathcal{F}},u}L_{\pi})
      +(1-\eta)L_{h,x,k}.
\]
Thus $r_{k+1}$ is locally $L$-Lipschitz in $\boldsymbol{x}$ near
$\boldsymbol{x}_k^\star$, which completes the proof.
\end{proof}

Define the standard open Euclidean ball
$$
B_{\varepsilon_k}\left(\boldsymbol{x}_k^{\star(i)}\right):=\left\{\boldsymbol{x} \in \mathbb{R}^{d}:\left\|\boldsymbol{x}-\boldsymbol{x}_k^{\star(i)}\right\|_2<\varepsilon_k\right\},
$$
and we have the following proposition.

\begin{proposition}
\label{prop:local-feas-state}
Assume that $r_{k+1}$ is continuous in $\boldsymbol{x}_{k}$ in a neighborhood of an optimizer. Consider the optimization problem \eqref{eq:opt_problem}, which is only well-defined on the (implicit) domain
\[
\{\boldsymbol{x}_{k} : r_{k+1}(\boldsymbol{x}_{k}) > 0\}.
\]
As we have defined, $\boldsymbol{x}_k^{\star(i)}$ is the optimal solution to \eqref{eq:opt_problem} for sample initialization $i$ at time step $k$. By \eqref{eq:CBF-update}, we have strict interiority at the optimum, namely $r_{k+1}(\boldsymbol{x}_k^{\star(i)}) > 0$. Then the following conditions hold
\begin{enumerate}
\item \emph{Neighborhood feasibility:} There exists $\varepsilon_k > 0$ such that
\[
B_{\varepsilon_k}\big(\boldsymbol{x}_k^\star\big)
\ \subseteq\
\{\boldsymbol{x}_k : r_{k+1}(\boldsymbol{x}_k) > 0\},
\]
i.e., a (relative) open neighborhood of $\boldsymbol{x}_k^\star$ remains feasible.

\item There exists $\varepsilon_k' > 0$ such that for all
$\boldsymbol{x}_k$ with $\|\boldsymbol{x}_k - \boldsymbol{x}_k^{\star(i)}\| < \varepsilon_k'$,
\[
r_{k+1}(\boldsymbol{x}_k) \geqslant 0.
\]
\end{enumerate}
Moreover, as is proved in Lemma \ref{lem:r-lip}, $r_{k+1}$ is locally $L$-Lipschitz near $\boldsymbol{x}_k^{\star(i)}$. Therefore, any
\[
\begin{aligned}
\|\boldsymbol{x}_k - \boldsymbol{x}_k^{\star(i)}\| \leqslant \frac{r_{k+1}(\boldsymbol{x}_k^{\star(i)})}{L}
\Longrightarrow r_{k+1}(\boldsymbol{x}_k) \geqslant 0,
\end{aligned}
\]
yielding an explicit feasibility radius in the state space.
\end{proposition}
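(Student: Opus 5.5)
The plan is to treat all three claims as consequences of continuity and strict positivity at a single point, $r_{k+1}(\boldsymbol{x}_k^{\star(i)})>0$. The Lipschitz part then gives a quantitative radius. Throughout, fix the sample index $i$ and the time step $k$, and write $\boldsymbol{x}^\star:=\boldsymbol{x}_k^{\star(i)}$ and $\rho:=r_{k+1}(\boldsymbol{x}^\star)>0$. Strict positivity is exactly what the CBF condition \eqref{eq:CBF-update} imposes at the optimizer; its existence is guaranteed by Assumption~\ref{assmp:40}.

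For the first claim (neighborhood feasibility), I will use the $\varepsilon$--$\delta$ definition of continuity of $r_{k+1}$ at $\boldsymbol{x}^\star$, choosing tolerance $\rho/2$. This gives $\delta>0$ such that $\|\boldsymbol{x}_k-\boldsymbol{x}^\star\|<\delta$ implies $|r_{k+1}(\boldsymbol{x}_k)-\rho|<\rho/2$, and hence $r_{k+1}(\boldsymbol{x}_k)>\rho/2>0$. We may shrink $\delta$ so that the ball lies inside the neighborhood where continuity is assumed. Setting $\varepsilon_k:=\delta$ gives $B_{\varepsilon_k}(\boldsymbol{x}^\star)\subseteq\{r_{k+1}>0\}$. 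Equivalently, $\{r_{k+1}>0\}$ is the preimage of the open set $(0,\infty)$ under a continuous map, so it is relatively open and contains $\boldsymbol{x}^\star$. The second claim follows at once with $\varepsilon_k':=\varepsilon_k$, since strict positivity implies nonnegativity.

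For the explicit radius, I will invoke Lemma~\ref{lem:r-lip}, which states that $r_{k+1}$ is $L$-Lipschitz on a neighborhood $\mathcal{N}$ of $\boldsymbol{x}^\star$, with $L$ given in \eqref{LipschitzCoeff}. For $\boldsymbol{x}_k\in\mathcal{N}$ we have
\[
r_{k+1}(\boldsymbol{x}_k)\geqslant r_{k+1}(\boldsymbol{x}^\star)-L\|\boldsymbol{x}_k-\boldsymbol{x}^\star\|\geqslant \rho-L\cdot\frac{\rho}{L}=0
\]
whenever $\|\boldsymbol{x}_k-\boldsymbol{x}^\star\|\leqslant\rho/L$. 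Note that $L>0$, because $L_{h,x,k}>0$ and $\eta\leqslant 1$ with the first summand positive.

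The main obstacle is that the Lipschitz estimate is only local. The implication holds unconditionally only if the closed ball $\bar B_{\rho/L}(\boldsymbol{x}^\star)$ lies inside $\mathcal{N}$, the intersection of the neighborhoods from Assumptions~\ref{assmp:41}--\ref{assmp:43} and Lemma~\ref{lem:r-lip}. I would handle this by stating the radius as $\min\{\rho/L,\ \varepsilon_{\mathcal{N}}\}$, where $\varepsilon_{\mathcal{N}}$ is the radius of a ball contained in $\mathcal{N}$. Alternatively, I would note that when the Lipschitz constants are taken on the compact sets $\mathscr{X},\mathscr{W}$ of Assumption~\ref{assmp:41}, the estimate holds on all of $\mathscr{X}$, so the bound $\rho/L$ applies directly for $\boldsymbol{x}_k\in\mathscr{X}$.
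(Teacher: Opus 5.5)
Your proposal is correct and follows the paper's proof essentially verbatim: continuity at $\boldsymbol{x}_k^{\star(i)}$ with tolerance $\rho/2$ gives items (1) and (2), and the Lipschitz estimate $r_{k+1}(\boldsymbol{x}_k)\geqslant \rho - L\cdot\rho/L = 0$ gives the explicit radius. Capping the radius at $\min\{\rho/L,\varepsilon_{\mathcal{N}}\}$ is a legitimate tightening of a point the paper leaves implicit here; the paper only adds the ``ball contained in the Lipschitz neighborhood'' proviso later, in Lemma~\ref{lemma:explicit-radius}. Keep that version rather than your global alternative, because Assumption~\ref{assmp:42} makes $h$ Lipschitz only locally, not on all of $\mathscr{X}$ or its image under $\boldsymbol{\mathcal{F}}$.
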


\begin{proof}
Let $\beta := r_{k+1}(\boldsymbol{x}_k^{\star(i)}) > 0$.
By continuity of $r_{k+1}$ at $\boldsymbol{x}_k^{\star(i)}$, there exists $\varepsilon_k > 0$ such that
\[
\|\boldsymbol{x}_k - \boldsymbol{x}_k^{\star(i)}\| < \varepsilon_k
\ \Longrightarrow\
\big|r_{k+1}(\boldsymbol{x}_k) - \beta\big| < \tfrac{\beta}{2}.
\]
Hence for all $\boldsymbol{x}_k$ in this neighborhood we have
\(
r_{k+1}(\boldsymbol{x}_k) > \beta/2 > 0,
\)
which proves item~(1) and naturally leads to item~(2) with some $\varepsilon_k' \leqslant \varepsilon_k$.

For the Lipschitz claim, by Lemma \ref{lem:r-lip} we have
\[
\big|r_{k+1}(\boldsymbol{x}_{k,1}) - r_{k+1}(\boldsymbol{x}_{k,2})\big|
\leqslant L\|\boldsymbol{x}_{k,1}-\boldsymbol{x}_{k,2}\|
\quad\text{locally}.
\]
Then
\[
\begin{aligned}
& \|\boldsymbol{x}_k - \boldsymbol{x}_k^{\star(i)}\|
\leqslant \frac{\beta}{L}
\ \Longrightarrow \\
& r_{k+1}(\boldsymbol{x}_k)
\geqslant r_{k+1}(\boldsymbol{x}_k^{\star(i)}) - L\cdot \tfrac{\beta}{L}
= 0,
\end{aligned}
\]
which establishes the explicit feasibility radius.
\end{proof}

\begin{remark}[On boundary optimizers in state space]
If an optimizer of \eqref{eq:opt_problem} happens to satisfy $r_{k+1}(\boldsymbol{x}_k^\star)=0$ (a boundary point), the neighborhood property above generally fails and arbitrarily small perturbations
of $\boldsymbol{x}_k^{\star(i)}$ may violate feasibility.
In practice, one enforces a positive safety margin by requiring a termination condition
$r_{k+1} \geqslant \beta > 0$ (``back-off''), or by appropriately scheduling the barrier parameter $\tau$.
\end{remark}

We then obtain the following corollary.

\begin{corollary}[Open neighborhood of feasible state sequences]
\label{corollary:open_neighborhood}
Let $\boldsymbol{X}^{\star(i)} := (\boldsymbol{x}_0^{\star(i)},\dots,\boldsymbol{x}_N^{\star(i)})$
be the optimal closed-loop state sequence by \eqref{eq:opt_problem} such that each $\boldsymbol{x}_k^{\star(i)}$, $0 \leqslant k \leqslant N$, is interior feasible in the sense of
Proposition~\ref{prop:local-feas-state}.
Then there exists a positive sequence
$\left(\varepsilon_0',\dots,\varepsilon_N'\right)$ such that the set
\[
\begin{aligned}
\boldsymbol{\mathfrak{X}}^{(i)}
:= \left \{ 
\boldsymbol{X}^{(i)}=(\boldsymbol{x}^{(i)}_0,\dots,\boldsymbol{x}^{(i)}_N) \middle|\right.\\
\left.\|\boldsymbol{x}^{(i)}_k - \boldsymbol{x}_k^{\star(i)}\| < \varepsilon_k', 0 \leqslant k \leqslant N
\right\}
\end{aligned}
\]
is a nonempty open subset of
$\mathbb{R}^{(N+1)n_x}$.
In particular, $\boldsymbol{\mathfrak{X}}^{(i)}$ is an open neighborhood of the nominal state sequence $\boldsymbol{X}^{\star(i)}$ in the trajectory space.
\end{corollary}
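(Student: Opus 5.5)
The approach is to take the radii $\varepsilon_k'$ from Proposition~\ref{prop:local-feas-state}, applied at each time step separately, and then observe that $\boldsymbol{\mathfrak{X}}^{(i)}$ is a finite Cartesian product of open balls.

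First, for each $k\in\{0,\dots,N\}$, the hypothesis says $\boldsymbol{x}_k^{\star(i)}$ is interior feasible. Item~(2) of Proposition~\ref{prop:local-feas-state} then gives some $\varepsilon_k'>0$ with $r_{k+1}(\boldsymbol{x}_k)\geqslant 0$ on $B_{\varepsilon_k'}(\boldsymbol{x}_k^{\star(i)})$. When Lemma~\ref{lem:r-lip} applies, we can take the explicit choice $\varepsilon_k' = r_{k+1}(\boldsymbol{x}_k^{\star(i)})/L$, shrunk if necessary to stay inside the neighborhood where the local Lipschitz bound holds. Since there are only $N+1$ indices, we obtain a positive finite sequence $(\varepsilon_0',\dots,\varepsilon_N')$. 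At the terminal index $k=N$ the function $r_{N+1}$ may not be defined within the horizon. In that case I would use the state constraint $h(\cdot,\mathbf{p}_{\mathbf{h}_N})>0$ and its continuity (Assumption~\ref{assmp:42}) in the same way; failing that, any positive radius suffices for the openness claim alone.

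Second, I would rewrite the set as
\[
\boldsymbol{\mathfrak{X}}^{(i)} = \prod_{k=0}^{N} B_{\varepsilon_k'}\big(\boldsymbol{x}_k^{\star(i)}\big) \subset \mathbb{R}^{(N+1)n_x}.
\]
Each factor is open in $\mathbb{R}^{n_x}$, so the finite product is open in the product topology. This topology coincides with the Euclidean topology on $\mathbb{R}^{(N+1)n_x}$. To show this directly, set $\bar\varepsilon:=\min_k \varepsilon_k'>0$. Any $\boldsymbol{X}$ in the set has slack $\delta_k:=\varepsilon_k'-\|\boldsymbol{x}_k-\boldsymbol{x}_k^{\star(i)}\|>0$ at each $k$. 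The Euclidean ball around $\boldsymbol{X}$ of radius $\min_k\delta_k$ then lies in the set, because each block norm is at most the full Euclidean norm.

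Third, nonemptiness is immediate, since $\boldsymbol{X}^{\star(i)}$ itself belongs to the set. Hence $\boldsymbol{\mathfrak{X}}^{(i)}$ is an open neighborhood of $\boldsymbol{X}^{\star(i)}$.

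The only delicate point is the choice of radii in the first step. Openness holds for any positive radii; the radii matter only if the product is also meant to consist of feasible trajectories. I would handle this by insisting that each $\varepsilon_k'$ is no larger than both the Lipschitz neighborhood radius and the radius from item~(2).
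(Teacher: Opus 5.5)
Your proposal is correct and follows the same route as the paper, which justifies the corollary by taking the per-stage balls $B_{\varepsilon_k'}(\boldsymbol{x}_k^{\star(i)})$ from Proposition~\ref{prop:local-feas-state} and forming their finite Cartesian product, an open set in $\mathbb{R}^{(N+1)n_x}$ that contains $\boldsymbol{X}^{\star(i)}$. Your treatment of the terminal index $k=N$ and your explicit Euclidean-ball check of openness are details the paper leaves implicit.
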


Proposition~\ref{prop:local-feas-state} guarantees that, at each time step $k$, there exists an open ball around the nominal state $\boldsymbol{x}_k^{\star(i)}$ on which the local safety condition $r_{k+1}(\boldsymbol{x}_k)\geqslant 0$ is satisfied. Taking the Cartesian product of these per-stage neighborhoods yields the open set $\boldsymbol{\mathfrak{X}}^{(i)}$ in Corollary~\ref{corollary:open_neighborhood}, which is an open neighborhood of the nominal closed-loop state sequence in $\mathbb{R}^{(N+1)n_x}$.

Concluding the previous results of this section, we have the following theorem.

\begin{theorem}
\label{thrm:46}
The set of reachable safe state sequences, namely $\boldsymbol{\mathcal{X}}_{r}$, can be covered by these finitely many open sets, i.e.,
\[
    \boldsymbol{\mathcal{X}}_{r} \subset \bigcup_{i=1}^{m} \boldsymbol{\mathfrak{X}}^{(i)},
\]
with $m$ being sufficiently large.
\end{theorem}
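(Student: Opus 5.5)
The plan is a standard compactness (Heine--Borel) argument on the trajectory space $\mathbb{R}^{(N+1)n_x}$, using the Corollary establishing compactness of $\boldsymbol{\mathcal{X}}_{r}$ together with Corollary~\ref{corollary:open_neighborhood}, which provides an open neighborhood $\boldsymbol{\mathfrak{X}}^{(i)}$ of each nominal sequence $\boldsymbol{X}^{\star(i)}$.

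\emph{Step 1.} I interpret the training samples as points that can be placed anywhere in $\boldsymbol{\mathcal{X}}_{r}$. For every $\boldsymbol{X}\in\boldsymbol{\mathcal{X}}_{r}$, I regard $\boldsymbol{X}$ as the optimal closed-loop sequence generated from some initialization. By Assumption~\ref{assmp:40} such an initialization admits a trajectory satisfying the strict CBF condition \eqref{eq:CBF-update}, so each stage is interior feasible in the sense of Proposition~\ref{prop:local-feas-state}.

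\emph{Step 2.} Proposition~\ref{prop:local-feas-state} and Lemma~\ref{lem:r-lip} then give per-stage radii $\varepsilon_k'(\boldsymbol{X})\ge r_{k+1}(\boldsymbol{x}_k)/L>0$. Corollary~\ref{corollary:open_neighborhood} turns these into an open product neighborhood $\boldsymbol{\mathfrak{X}}(\boldsymbol{X})\ni\boldsymbol{X}$. The family $\{\boldsymbol{\mathfrak{X}}(\boldsymbol{X})\}_{\boldsymbol{X}\in\boldsymbol{\mathcal{X}}_{r}}$ is therefore an open cover of $\boldsymbol{\mathcal{X}}_{r}$.

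\emph{Step 3.} Since $\boldsymbol{\mathcal{X}}_{r}$ is compact, this cover has a finite subcover indexed by $\boldsymbol{X}^{\star(1)},\dots,\boldsymbol{X}^{\star(m_0)}$. Choosing the training set to contain these sequences gives the claimed inclusion for every $m\ge m_0$, because adding further open sets does not shrink the union.

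To connect ``$m$ sufficiently large'' with i.i.d.\ sampling rather than a deliberate choice of samples, I would use a Lebesgue-number argument. The margin $\beta(\boldsymbol{X})=\min_k r_{k+1}(\boldsymbol{x}_k)$ is continuous and positive on the compact set, so it attains a minimum $\beta_{\min}>0$. All radii are then at least a uniform $\delta=\beta_{\min}/L$, with $L$ taken as a maximum of the local constants over the compact set. Consequently any $\delta$-net of samples covers $\boldsymbol{\mathcal{X}}_{r}$, and a dense sampling produces such a net once $m$ is large.

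\textbf{Main obstacle.} The delicate step is Step 1 and the uniformity in the final argument. Interiority $r_{k+1}>0$ is only guaranteed at optimizers, and the radii in Proposition~\ref{prop:local-feas-state} are local. I must therefore argue that the strict positivity, and the local Lipschitz constants from Assumptions~\ref{assmp:41}--\ref{assmp:43}, hold uniformly over the compact set, so that the margin does not degenerate to zero near boundary points; this is exactly the situation flagged in the remark on boundary optimizers. I would handle it by imposing the back-off $r_{k+1}\ge\beta>0$ mentioned in that remark, which yields a uniform radius directly.
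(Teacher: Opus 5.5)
Your Steps 2--3 are the paper's proof. It also indexes an open cover by every point of $\boldsymbol{\mathcal{X}}_{r}$, extracts a finite subcover by Heine--Borel, and pads it with extra samples for $m\geqslant m_0$. The difference is Step 1, where you try to justify the premise that the paper only asserts ``by construction'', and your justification does not go through. Assumption~\ref{assmp:40} says that \emph{some} control trajectory satisfies \eqref{eq:CBF-update}. The radii in Proposition~\ref{prop:local-feas-state}, however, need $r_{k+1}(\boldsymbol{x}_k)>0$ for the residual built from the \emph{learned} input $\boldsymbol{u}_k=\pi_{\mathbf{W}}(\boldsymbol{x}_k,\boldsymbol{\xi}_k)$. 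Nothing forces the learned policy to follow that certified trajectory away from the training samples. Moreover, $\boldsymbol{\mathcal{X}}_{r}$ is the product of the sets $\mathcal{X}_r(k)$, and a generic element of it is not the closed-loop sequence of any initialization. For a fixed policy those sequences form only the $n_x$-dimensional image of $\boldsymbol{x}_0\mapsto(\boldsymbol{x}_0,\Phi_1(\boldsymbol{x}_0),\dots,\Phi_N(\boldsymbol{x}_0))$, where $\Phi_k$ is the $k$-step closed-loop map with Lipschitz constant $L_{\Phi_k}$. So you cannot ``regard'' an arbitrary $\boldsymbol{X}$ as an optimizer. Interiority at every point of the covered set has to be imposed as a hypothesis, and Heine--Borel then contributes only finiteness.

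Your uniform-radius extension inherits this problem. Positivity of $\beta(\boldsymbol{X})=\min_k r_{k+1}(\boldsymbol{x}_k)$ on all of $\boldsymbol{\mathcal{X}}_{r}$ already \emph{is} one-step CBF satisfaction at every reachable safe state, which is what the cover is supposed to deliver. Once you assume it, the sampling argument does no work. The claim that dense sampling yields a $\delta$-net also fails in $\mathbb{R}^{(N+1)n_x}$. A sample trajectory that is $\delta$-close at stages $0$ and $1$ to $(\boldsymbol{x}_0,\boldsymbol{x}_1,\dots)$ forces $\|\boldsymbol{x}_1-\Phi_1(\boldsymbol{x}_0)\|<(1+L_{\Phi_1})\delta$, and for small $\delta$ the product generally contains points violating this. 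Your back-off idea is the right repair, but only when it is imposed at the \emph{samples}. If $r_{k+1}(\boldsymbol{x}_k^{\star(i)})\geqslant\beta$ for all $i,k$ and $L$ is uniform on $\mathscr{X}$, every sample gives radius $\beta/L$ at every stage. Choosing the $\boldsymbol{x}_0^{(i)}$ as a $\beta/(L\max\{1,L_{\Phi_1},\dots,L_{\Phi_N}\})$-net of the compact set $\mathcal{X}_r(0)$ then covers the compact set of closed-loop sequences, though not the product. Finally, $\pi_{\mathbf{W}}$ changes whenever the training set does, and with it $r_{k+1}$, $L_\pi$ and every radius. Hence ``extract the subcover, then train on it'' (your Step 3 and the paper's last step) is coherent only under one of two conditions. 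Either the margin and Lipschitz bounds are assumed uniform over the retrained policies, or they are verified a posteriori for the final policy.
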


\begin{proof}
By construction, for every reachable safe state sequence $\boldsymbol{X}^{\star(i)} = (\boldsymbol{x}_0^{\star(i)},\dots,\boldsymbol{x}_N^{\star(i)}) \in \boldsymbol{\mathcal{X}}_{r}$, there exists a nonempty open neighborhood $\boldsymbol{\mathfrak{X}}(\boldsymbol{X}^{\star(i)})$ such that $\boldsymbol{X}^{\star(i)} \in \boldsymbol{\mathfrak{X}}(\boldsymbol{X}^{\star(i)}) \subset \boldsymbol{\mathcal{X}}_{r}$.
Hence, the family $\{\boldsymbol{\mathfrak{X}}(\boldsymbol{X}^{\star(i)})\}_{\boldsymbol{X}^{\star(i)} \in \boldsymbol{\mathcal{X}}_{r}}$ forms an open cover of $\boldsymbol{\mathcal{X}}_{r}$. Since $\boldsymbol{\mathcal{X}}_{r}$ is compact, the Heine--Borel theorem (see e.g. \cite{rudin1976principles}) implies that this open cover admits a finite subcover. Therefore, there exist trajectories
\[
    \boldsymbol{X}^{\star(1)},\dots,\boldsymbol{X}^{\star(m_0)} \in \boldsymbol{\mathcal{X}}_{r}
\]
such that
\[
    \boldsymbol{\mathcal{X}}_{r}
    \subset \bigcup_{i=1}^{m_0} \boldsymbol{\mathfrak{X}}\bigl(\boldsymbol{X}^{\star(i)}\bigr).
\]
For any $m \geqslant m_0$, we may add further sampled trajectories and corresponding open sets without affecting this inclusion, so the statement holds for all sufficiently large $m$.
\end{proof}

\begin{remark}
Theorem~\ref{thrm:46} shows that, under the proposed algorithm, the inferred system states and control inputs remain feasible for any test initial condition $\boldsymbol{x}(0) \in \mathcal{X}_{r}(0)$, provided that the neural control policy is trained on a finite number of data samples.

Although several efforts have been made to empirically improve the feasibility of DPCs from a probabilistic perspective, the results in this section constitute, to the best of our knowledge, the first rigorous attempt to prove strict feasibility of both state and control outputs generated by neural controllers. In particular, we show that strict feasibility of the learning-based controller can be ensured using only a finite number of training samples, which appears to be the first result of this kind in the literature on white-box learning-based control.
\end{remark}

In addition, we could also prove the following theorem.

\begin{theorem}
\label{thrm:47}
    With the increase of the number of training data samples $m$, the probability of inferred controls and the corresponding states to violate the feasibility conditions is non-increasing and converges to $0$ with a finite $m$.
\end{theorem}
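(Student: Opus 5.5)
We would prove Theorem~\ref{thrm:47} by recasting the violation probability as the probability mass of an uncovered set, and then reading off monotonicity and finite-$m$ vanishing from Theorem~\ref{thrm:46}. The probability is taken over test initializations (equivalently, reachable test trajectories) drawn from a distribution $\mathbb{P}$ supported on $\boldsymbol{\mathcal{X}}_{r}$. For a given training set of $m$ samples, define the covered set $\mathcal{C}_m := \bigcup_{i=1}^{m} \boldsymbol{\mathfrak{X}}^{(i)}$ and the violation event $\mathcal{V}_m := \{\boldsymbol{X} \in \boldsymbol{\mathcal{X}}_{r} : r_{k+1}(\boldsymbol{x}_k) < 0 \text{ for some } k\}$.

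The first step is to show $\mathcal{V}_m \subseteq \boldsymbol{\mathcal{X}}_{r} \setminus \mathcal{C}_m$. If a test trajectory lies in some $\boldsymbol{\mathfrak{X}}^{(i)}$, then each of its states is within $\varepsilon_k'$ of $\boldsymbol{x}_k^{\star(i)}$. By Proposition~\ref{prop:local-feas-state} (with the explicit radius $r_{k+1}(\boldsymbol{x}_k^{\star(i)})/L$ from Lemma~\ref{lem:r-lip}), this gives $r_{k+1}(\boldsymbol{x}_k)\geqslant 0$ for all $k$. So the CBF condition \eqref{eq:CBF-update} holds along the whole horizon, and by forward invariance the state constraint $h\geqslant 0$ holds too. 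Hence $\mathbb{P}(\text{violation}) \leqslant p_m := \mathbb{P}(\boldsymbol{\mathcal{X}}_{r}\setminus \mathcal{C}_m)$.

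The second step is monotonicity. Adding training samples (nested training sets, $m \le m'$) only adds open sets to the union, so $\mathcal{C}_m \subseteq \mathcal{C}_{m'}$ and therefore $p_{m'} \leqslant p_m$. The third step is vanishing at finite $m$. Theorem~\ref{thrm:46} provides $m_0$ with $\boldsymbol{\mathcal{X}}_{r} \subset \mathcal{C}_{m_0}$, so $p_m = 0$ for all $m \geqslant m_0$. For i.i.d.\ sampling, we would choose the $m_0$ centers of the finite subcover and argue that, once the samples are dense enough, every set of that subcover is replaced by some sample's neighborhood. A Lebesgue number $\delta$ of the cover, which exists by compactness of $\boldsymbol{\mathcal{X}}_{r}$, makes this precise: once every $\delta$-ball of $\boldsymbol{\mathcal{X}}_{r}$ contains a sample, coverage follows.

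The main obstacle is the first step, because the neighborhoods $\boldsymbol{\mathfrak{X}}^{(i)}$ must be defined using the \emph{final} trained policy $\pi_{\mathbf{W}}$. Adding samples retrains $\mathbf{W}$, which changes $L_\pi$, $L$ and the margins $r_{k+1}(\boldsymbol{x}_k^{\star(i)})$. Monotonicity therefore needs a uniform lower bound on the radii. We would attempt this by enforcing the back-off margin $r_{k+1}\geqslant\beta>0$ from the remark on boundary optimizers, together with a uniform Lipschitz bound $\bar L_\pi$ (e.g.\ via spectral-norm control of $\mathbf{W}$). These give a uniform radius $\beta/\bar L$ that is independent of $m$, after which the covering argument goes through uniformly in $m$.
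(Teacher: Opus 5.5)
Your first three steps follow the paper's proof. The paper also bounds the violation probability by $\chi(\boldsymbol{\mathcal{X}}_{r}\setminus\bigcup_{i\leqslant m}\boldsymbol{\mathfrak{X}}^{(i)})$, gets monotonicity from nested unions, and gets vanishing from the finite subcover of Theorem~\ref{thrm:46}. It closes that last step only under an explicit proviso: the training set must \emph{contain} the $m_0$ subcover trajectories (``actively designing the training data'').

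\textbf{The Lebesgue-number step.} This is where you try to drop the proviso, and the argument breaks there. A Lebesgue number $\delta$ of the cover $\{\boldsymbol{\mathfrak{X}}(\boldsymbol{X})\}$ only tells you that a small set containing a test trajectory and a nearby sample lies in some member of the cover. That member is the neighborhood of some generally untrained center, so this is not membership in $\mathcal{C}_m$. It also does not place the test trajectory in the sample's own $\boldsymbol{\mathfrak{X}}^{(i)}$, whose radius $r_{k+1}(\boldsymbol{x}_k^{\star(i)})/L$ may be arbitrarily small. What you need is a lower bound on the radii of the sample neighborhoods themselves; once you have that, the Lebesgue number plays no role. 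The density condition also cannot be posed in trajectory space. Training trajectories lie on the $n_x$-dimensional graph $\boldsymbol{x}_0\mapsto(\boldsymbol{x}_0,\boldsymbol{x}_1(\boldsymbol{x}_0),\dots,\boldsymbol{x}_N(\boldsymbol{x}_0))$ of the closed loop, whereas $\boldsymbol{\mathcal{X}}_{r}$ is a product of per-stage sets. So ``every $\delta$-ball of $\boldsymbol{\mathcal{X}}_{r}$ contains a sample'' is generally unattainable for small $\delta$ once $N\geqslant 1$. Under i.i.d.\ sampling you would in any case get at best an almost surely finite random threshold, not a deterministic $m_0$.

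\textbf{The $\mathbf{W}$-dependence.} Your diagnosis is correct, and the paper ignores this point. However, a uniform radius $\beta/\bar L$ does not by itself make $\mathcal{C}_m$ nested, because retraining moves the centers $\boldsymbol{x}_k^{\star(i)}$ for $k\geqslant 1$, even though $\boldsymbol{x}_0^{(i)}$ is fixed. Both problems go away if you pull back to initial conditions:
\begin{itemize}
\item Let $\bar L_{cl}:=L_{\boldsymbol{\mathcal{F}},x}+L_{\boldsymbol{\mathcal{F}},u}\bar L_{\pi}$, which bounds the Lipschitz constant of every one-step closed-loop map, and set $\rho_0:=\beta/(\bar L\max\{1,\bar L_{cl}\}^{N-1})$.
\item Any $\boldsymbol{x}_0\in\mathcal{X}_r(0)$ with $\|\boldsymbol{x}_0-\boldsymbol{x}_0^{(i)}\|\leqslant\rho_0$ then satisfies $\|\boldsymbol{x}_k-\boldsymbol{x}_k^{\star(i)}\|\leqslant\beta/\bar L$ for $k\leqslant N-1$.
\item Hence $r_{k+1}(\boldsymbol{x}_k)\geqslant 0$ along the whole horizon. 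This holds for \emph{every} policy with margin $\beta$ on its own training trajectories and $L_\pi\leqslant\bar L_\pi$, assuming the local Lipschitz bounds hold on these balls.
\end{itemize}
The balls $B_{\rho_0}(\boldsymbol{x}_0^{(i)})$ have fixed centers and a fixed radius, so their union is genuinely nested in $m$. It covers the compact set $\mathcal{X}_r(0)$ as soon as the $\boldsymbol{x}_0^{(i)}$ form a $\rho_0$-net. It also needs certified neighborhoods only around training samples, not around arbitrary points of $\boldsymbol{\mathcal{X}}_{r}$ as in the proof of Theorem~\ref{thrm:46}.

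\textbf{What monotonicity you actually prove.} You show that the bound $p_m$ is non-increasing, not the violation probability of the retrained policy. A quantity dominated by a non-increasing sequence need not itself be non-increasing, so only the finite-$m$ vanishing transfers to the statement as written. The paper makes the same jump right after \eqref{eq:pm-upper-bound}.
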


\begin{proof}

Define the $m$-dependent subset of \emph{guaranteed feasible} closed-loop state sequences
\[
    \boldsymbol{\mathcal{X}}_{\mathrm{feas}}^{(m)}
    := \bigcup_{i=1}^{m} \boldsymbol{\mathfrak{X}}^{(i)} \subset \boldsymbol{\mathcal{X}}_{r}.
\]
By construction (local feasibility around each nominal trajectory and continuity of the closed-loop dynamics and constraints), for every $\boldsymbol{X} \in \boldsymbol{\mathcal{X}}_{\mathrm{feas}}^{(m)}$ the inferred control inputs and the corresponding state sequence satisfy all feasibility constraints. Hence any constraint violation can only occur for state sequences in the complement
\[
    \boldsymbol{\mathcal{V}}^{(m)}
    := \boldsymbol{\mathcal{X}}_{r} \setminus \boldsymbol{\mathcal{X}}_{\mathrm{feas}}^{(m)},
\]
where ``$\setminus$'' denotes the set-difference operator. Let $\chi$ be a probability measure on $\boldsymbol{\mathcal{X}}_{r}$ describing the distribution of (reference) closed-loop state sequences induced by random initial conditions $\boldsymbol{x}_0 \in \mathcal{X}_{r}(0)$. 
Define
\[
    p_m
    := \mathbb{P}\{\text{violation of feasibility with $m$ training samples}\}.
\]
From the above discussion, any violation must correspond to a trajectory in $\boldsymbol{\mathcal{V}}^{(m)}$, and therefore
\begin{equation}
    p_m 
    \leqslant \chi\bigl(\boldsymbol{\mathcal{V}}^{(m)}\bigr)
    = \chi\bigl(\boldsymbol{\mathcal{X}}_{r} \setminus \boldsymbol{\mathcal{X}}_{\mathrm{feas}}^{(m)}\bigr).
    \label{eq:pm-upper-bound}
\end{equation}

As $m$ increases, we only add more open sets to the union, hence
\[
    \boldsymbol{\mathcal{X}}_{\mathrm{feas}}^{(m)}
    = \bigcup_{i=1}^{m} \boldsymbol{\mathfrak{X}}^{(i)}
    \subseteq
    \bigcup_{i=1}^{m+1} \boldsymbol{\mathfrak{X}}^{(i)}
    = \boldsymbol{\mathcal{X}}_{\mathrm{feas}}^{(m+1)}.
\]
Therefore the complements form a nested decreasing sequence,
\[
    \boldsymbol{\mathcal{V}}^{(m+1)}
    = \boldsymbol{\mathcal{X}}_{r} \setminus \boldsymbol{\mathcal{X}}_{\mathrm{feas}}^{(m+1)}
    \subseteq
    \boldsymbol{\mathcal{X}}_{r} \setminus \boldsymbol{\mathcal{X}}_{\mathrm{feas}}^{(m)}
    = \boldsymbol{\mathcal{V}}^{(m)}.
\]
Since $\chi$ is a probability measure, this implies
\[
    \chi\bigl(\boldsymbol{\mathcal{V}}^{(m+1)}\bigr)
    \le
    \chi\bigl(\boldsymbol{\mathcal{V}}^{(m)}\bigr),
\]
and by \eqref{eq:pm-upper-bound} we obtain
\[
    p_{m+1} \leqslant p_{m}, \qquad \forall\,m \in \mathbb{N}.
\]
Hence the violation probability $p_m$ is non-increasing in $m$.

By Proposition~\ref{thrm:46}, the compact set $\boldsymbol{\mathcal{X}}_{r}$ can be covered by finitely many of these open neighborhoods, i.e., there exists an integer $m_0 \in \mathbb{N}$ and trajectories
\[
    \boldsymbol{X}^{\star(1)},\dots,\boldsymbol{X}^{\star(m_0)} \in \boldsymbol{\mathcal{X}}_{r}
\]
such that
\[
    \boldsymbol{\mathcal{X}}_{r}
    \subset \bigcup_{i=1}^{m_0} \boldsymbol{\mathfrak{X}}\bigl(\boldsymbol{X}^{\star(i)}\bigr).
\]
If the training set contains at least these $m_0$ trajectories (for example, by actively designing the training data to realize this finite subcover), then for any $m \geqslant m_0$ we have
\[
    \boldsymbol{\mathcal{X}}_{\mathrm{feas}}^{(m)}
    \supset \bigcup_{i=1}^{m_0} \boldsymbol{\mathfrak{X}}\bigl(\boldsymbol{X}^{\star(i)}\bigr)
    \supset \boldsymbol{\mathcal{X}}_{r},
\]
and therefore
\[
    \boldsymbol{\mathcal{V}}^{(m)}
    = \boldsymbol{\mathcal{X}}_{r} \setminus \boldsymbol{\mathcal{X}}_{\mathrm{feas}}^{(m)}
    = \emptyset,
    \qquad m \geqslant m_0.
\]
Consequently, $\chi\bigl(\boldsymbol{\mathcal{V}}^{(m)}\bigr)=0$ for all $m \geqslant m_0$, and by \eqref{eq:pm-upper-bound} we obtain
\[
    p_m = 0, \qquad \forall\, m \geqslant m_0.
\]

Combining the two steps, we have shown that the sequence $\{p_m\}_{m\in\mathbb{N}}$ is non-increasing and, moreover, becomes identically zero for all $m$ larger than some finite threshold $m_0$. 
Equivalently, the probability that the inferred controls and the corresponding state trajectories violate the feasibility conditions is non-increasing in $m$ and converges to zero in a finite number of steps.
\end{proof}

\begin{remark}
    Theorem~\ref{thrm:47} shows that each training sample (assumed to satisfy the CBF condition) in the optimization problem \eqref{eq:opt_problem} contributes positively to the learned DPC policy in a topological sense. In most existing DPC results, this property is largely taken for granted and justified only empirically; Theorem~\ref{thrm:47} provides a formal proof of this intuitive idea. At present, the theorem is an existence result, and it motivates us to seek constructive proofs that can directly guide the training of DPC, its variants, and other white-box learning-based controllers from a topological and geometric perspective inspired by Theorem~\ref{thrm:47}. 
\end{remark}

One would naturally think about how to determine the number of data samples required for training, in the aim of guaranteeing the exact feasibility. In the following part of this section, we will investigate into this problem and propose preliminary solutions.

\begin{lemma}[Explicit feasibility radius]
\label{lemma:explicit-radius}
Let Assumptions~\ref{assmp:41}--\ref{assmp:43} hold, and let
$\boldsymbol{X}^{\star(i)}$ be a training trajectory satisfying the strict
CBF condition
$ r_{k+1}(\boldsymbol{x}_k^{\star(i)})>0$, $k \in \{0, \dots, N-1\}$.
Suppose that each $r_{k+1}$ is Lipschitz continuous on a neighborhood of
$\boldsymbol{x}_k^{\star(i)}$ with constant $L_k>0$, and define the uniform
Lipschitz constant
$
L:=\max_{k=0,\ldots,N-1} L_k .
 $
Then, for every
\begin{equation} \label{epsilonki}
\boldsymbol{x}_k \in B_{\epsilon_k^{(i)}}(\boldsymbol{x}_k^{\star(i)}),
\qquad
\epsilon_k^{(i)}
=
\frac{r_{k+1}(\boldsymbol{x}_k^{\star(i)})}{L},
\end{equation}
provided this ball is contained in the corresponding Lipschitz neighborhood,
the one-step CBF condition satisfies
$
r_{k+1}(\boldsymbol{x}_k) \geqslant 0 .
 $
Consequently, the learned policy preserves one-step state feasibility within the Euclidean ball
$B_{\epsilon_k^{(i)}}(\boldsymbol{x}_k^{\star(i)})$ with radius $\epsilon_k^{(i)}$.
\end{lemma}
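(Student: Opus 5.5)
The argument is a direct Lipschitz lower bound, essentially a uniform-constant version of the explicit-radius part of Proposition~\ref{prop:local-feas-state}. Fix a training trajectory index $i$ and a stage $k \in \{0,\dots,N-1\}$, and set $\beta_k := r_{k+1}(\boldsymbol{x}_k^{\star(i)}) > 0$, which is positive by the strict CBF hypothesis. The Lipschitz property of $r_{k+1}$ near $\boldsymbol{x}_k^{\star(i)}$ is either taken from the hypothesis or supplied by Lemma~\ref{lem:r-lip}, whose constant \eqref{LipschitzCoeff} is built from Assumptions~\ref{assmp:41}--\ref{assmp:43}. The first step is to replace the stage-wise constant $L_k$ by the uniform constant $L$. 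Since $L_k \leqslant L$, any $L_k$-Lipschitz bound on the neighborhood is also an $L$-Lipschitz bound, so $|r_{k+1}(\boldsymbol{x}) - r_{k+1}(\boldsymbol{y})| \leqslant L\|\boldsymbol{x}-\boldsymbol{y}\|$ holds on that neighborhood.

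Next I take an arbitrary $\boldsymbol{x}_k \in B_{\epsilon_k^{(i)}}(\boldsymbol{x}_k^{\star(i)})$ with $\epsilon_k^{(i)} = \beta_k/L$. By the standing proviso this ball lies inside the Lipschitz neighborhood, so the bound applies to the pair $(\boldsymbol{x}_k, \boldsymbol{x}_k^{\star(i)})$ and gives
\[
r_{k+1}(\boldsymbol{x}_k) \geqslant \beta_k - L\|\boldsymbol{x}_k - \boldsymbol{x}_k^{\star(i)}\| > \beta_k - L\cdot\frac{\beta_k}{L} = 0 .
\]
Because the ball is open, the inequality is in fact strict, which is stronger than the claimed $r_{k+1}(\boldsymbol{x}_k) \geqslant 0$.

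For the final sentence of the statement, I will unpack $r_{k+1}(\boldsymbol{x}_k) \geqslant 0$ using \eqref{eq:CBF-update}, with the successor state computed by the learned policy $\boldsymbol{u}_k = \pi_{\mathbf{W}}(\boldsymbol{x}_k,\boldsymbol{\xi}_k)$. This gives $h(\boldsymbol{x}_{k+1},\mathbf{p}_{\mathbf{h}_{k+1}}) \geqslant (1-\eta)h(\boldsymbol{x}_k,\mathbf{p}_{\mathbf{h}_k})$. With $\eta \in (0,1]$, whenever $h(\boldsymbol{x}_k,\mathbf{p}_{\mathbf{h}_k}) \geqslant 0$ the successor also satisfies $h \geqslant 0$. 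That is exactly one-step forward invariance of the safe set under the learned policy inside the ball. I will note that this holds for each constraint $l$ separately and, where a common radius is needed, take the minimum of the per-constraint radii.

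The only delicate point is the containment proviso. The Lipschitz estimate from Lemma~\ref{lem:r-lip} is purely local, so if $\beta_k/L$ exceeds the radius of the neighborhood, the bound cannot be used across the whole ball. The statement assumes this away explicitly, so I will simply invoke the proviso. I will also remark that, if it fails, one can shrink to $\min\{\epsilon_k^{(i)}, \rho_k\}$, where $\rho_k$ is the radius of the Lipschitz neighborhood, and the conclusion persists.
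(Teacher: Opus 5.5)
Your proof is correct and is essentially the paper's argument: the one-line Lipschitz lower bound $r_{k+1}(\boldsymbol{x}_k)\geqslant r_{k+1}(\boldsymbol{x}_k^{\star(i)})-L_k\|\boldsymbol{x}_k-\boldsymbol{x}_k^{\star(i)}\|$ combined with $L_k\leqslant L$. The paper states this with non-strict inequalities, whereas you correctly note that the open ball gives strict positivity. Your reading of the final sentence as $h(\boldsymbol{x}_k,\mathbf{p}_{\mathbf{h}_k})\geqslant 0\Rightarrow h(\boldsymbol{x}_{k+1},\mathbf{p}_{\mathbf{h}_{k+1}})\geqslant 0$ is also the right one, and it is more careful than the paper's closing remark $B_{\epsilon_k^{(i)}}(\boldsymbol{x}_k^{\star(i)})\cap\mathcal{R}(k)\subseteq\mathcal{X}_r(k)$: that remark does not follow from $r_{k+1}\geqslant 0$ alone, since that condition says nothing about the sign of $h$ at $\boldsymbol{x}_k$ itself.
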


\begin{proof}
By Lipschitz continuity of $r_{k+1}$, for any
$\boldsymbol{x}_k$ in the considered neighborhood,
\[
|r_{k+1}(\boldsymbol{x}_k)
-r_{k+1}(\boldsymbol{x}_k^{\star(i)})|
\le
L_k\|\boldsymbol{x}_k-\boldsymbol{x}_k^{\star(i)}\|.
\]
Since $L \geqslant L_k$, the radius
$
\epsilon_k^{(i)}
=
\frac{r_{k+1}(\boldsymbol{x}_k^{\star(i)})}{L}
 $
is a conservative lower bound on the local feasibility radius. Hence, if
 $
\|\boldsymbol{x}_k-\boldsymbol{x}_k^{\star(i)}\|
\le
\epsilon_k^{(i)},
 $
then
$ 
L_k\|\boldsymbol{x}_k-\boldsymbol{x}_k^{\star(i)}\|
\le
L\|\boldsymbol{x}_k-\boldsymbol{x}_k^{\star(i)}\|
\le
r_{k+1}(\boldsymbol{x}_k^{\star(i)}).
$
Therefore,
\[
r_{k+1}(\boldsymbol{x}_k)
\ge
r_{k+1}(\boldsymbol{x}_k^{\star(i)})
-
L_k\|\boldsymbol{x}_k-\boldsymbol{x}_k^{\star(i)}\|
\geqslant 0 .
\]
Thus, every state in
$B_{\epsilon_k^{(i)}}(\boldsymbol{x}_k^{\star(i)})$
satisfies the one-step CBF condition under the learned policy
\[
B_{\epsilon_k^{(i)}}(\boldsymbol{x}_k^{\star(i)})
\subseteq
\{\boldsymbol{x}_k:r_{k+1}(\boldsymbol{x}_k)\geqslant 0\}.
\]
If, in addition, the ball is restricted to the reachable set at time $k$, then
\[
B_{\epsilon_k^{(i)}}(\boldsymbol{x}_k^{\star(i)})\cap \mathcal R(k)
\subseteq
\mathcal X_r(k).
\]
\end{proof}

\begin{proposition}[Maximal required sample size]
\label{prop:maximal-sample-size}
Given that there exists a set of $m_{0}$ closed-loop trajectories in the training set, which satisfy that the feasibility balls $B_{\epsilon_{k}^{(i)}}\left(\boldsymbol{x}_{k}^{\star(i)}\right)$ (with conservative radii $\epsilon_{k}^{(i)}$ defined in \eqref{epsilonki}) cover the reachable safe set $\mathcal{X}_{r}(k)$ for each time step $k$. Then, $m_{0}$ constitutes the maximal number of training samples necessarily required to guarantee exact feasibility.
\end{proposition}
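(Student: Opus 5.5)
The plan is to read ``maximal required sample size'' as a sufficiency statement. If the training set contains the $m_0$ trajectories whose conservative balls cover every $\mathcal{X}_r(k)$, then feasibility holds exactly for every test initial condition, so no more than $m_0$ samples are ever needed. Adding samples beyond $m_0$ cannot destroy the covering; this is the monotonicity argument already used in the proof of Theorem~\ref{thrm:47}.

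\textbf{Step 1: local certificates.} Fix $k\in\{0,\dots,N-1\}$ and a test state $\boldsymbol{x}_k\in\mathcal{X}_r(k)$. By the covering hypothesis there is an index $i\le m_0$ with $\boldsymbol{x}_k\in B_{\epsilon_k^{(i)}}(\boldsymbol{x}_k^{\star(i)})$. Lemma~\ref{lemma:explicit-radius} then gives $r_{k+1}(\boldsymbol{x}_k)\geqslant 0$. Here I take its standing proviso as part of the hypothesis: each covering ball lies in the Lipschitz neighborhood of Lemma~\ref{lem:r-lip}.

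\textbf{Step 2: induction in time.} Start from $\boldsymbol{x}_0\in\mathcal{X}_r(0)=\mathcal{X}(0)$, so $h(\boldsymbol{x}_0)\geqslant0$. Suppose $\boldsymbol{x}_k\in\mathcal{X}_r(k)$. Step 1 gives
\[
h(\boldsymbol{x}_{k+1},\mathbf{p}_{\mathbf{h}_{k+1}})\geqslant(1-\eta)\,h(\boldsymbol{x}_k,\mathbf{p}_{\mathbf{h}_k})\geqslant0,
\]
because $\eta\in(0,1]$. Next, $\boldsymbol{x}_{k+1}=\boldsymbol{\mathcal{F}}(\boldsymbol{x}_k,\pi_{\mathbf{W}}(\boldsymbol{x}_k,\boldsymbol{\xi}_k))\in\mathcal{R}(k+1)$, provided the policy output lies in $\mathcal{U}$. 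Together these give $\boldsymbol{x}_{k+1}\in\mathcal{R}(k+1)\cap\mathcal{X}(k+1)=\mathcal{X}_r(k+1)$ by \eqref{Xrk}. Induction over $k$ then shows the whole inferred trajectory stays in $\boldsymbol{\mathcal{X}}_r$ and meets all state constraints. So feasibility holds deterministically once $m\geqslant m_0$, matching $p_m=0$ in Theorem~\ref{thrm:47}.

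\textbf{Step 3: monotonicity.} For $m\geqslant m_0$ the union of balls only grows, so the covering persists. Hence $m_0$ is an upper bound on the number of samples necessarily required.

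\textbf{Main obstacle.} Step 2 needs the policy's inputs to satisfy \eqref{eq:opt_problem_input}, i.e.\ $\pi_{\mathbf{W}}(\boldsymbol{x}_k,\boldsymbol{\xi}_k)\in\mathcal{U}$, so that $\boldsymbol{x}_{k+1}$ is reachable. The first thing to try is to assume the policy is trained with a bounded or projected output layer, or that the input-barrier penalty is enforced. If neither holds, I would run the same Lipschitz-ball argument on $g(\pi_{\mathbf{W}}(\cdot))$ with constant $L_{g}L_\pi$ and shrink each radius to the minimum of the two. A second subtlety is the uniform constant $L=\max_k L_k$ in \eqref{epsilonki}. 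The balls must be formed with this same $L$ for every $k$, so that one index set $\{1,\dots,m_0\}$ covers all time steps simultaneously; this is exactly what the hypothesis of the proposition provides.
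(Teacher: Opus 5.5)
Your proposal is correct at the paper's level of rigor, but it is organized differently from the paper's proof. The paper argues by set inclusion alone. Because $\epsilon_k^{(i)}$ uses the worst-case constant $L$, each conservative ball $B_{\epsilon_k^{(i)}}(\boldsymbol{x}_k^{\star(i)})$ lies inside the true feasibility ball $B_{\bar{\epsilon}_k^{(i)}}(\boldsymbol{x}_k^{\star(i)})$, so the true regions also cover $\mathcal{X}_r(k)$. Hence $m_0$ samples suffice, and since the true regions are larger, fewer samples might do. That comparison of conservative and true radii is how the paper justifies the word ``maximal''. It never explains why a cover of each $\mathcal{X}_r(k)$ separately gives feasibility of whole closed-loop trajectories.

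Your Step~2 supplies exactly that missing link. From $r_{k+1}(\boldsymbol{x}_k)\geqslant 0$, $h(\boldsymbol{x}_k,\mathbf{p}_{\mathbf{h}_k})\geqslant 0$ and $\eta\in(0,1]$ you get $h(\boldsymbol{x}_{k+1},\mathbf{p}_{\mathbf{h}_{k+1}})\geqslant 0$. Bounded inputs then keep $\boldsymbol{x}_{k+1}\in\mathcal{R}(k+1)$, so the state re-enters $\mathcal{X}_r(k+1)$, where the cover applies again. This forward-invariance induction is what actually turns a covering into exact feasibility. It also exposes the input-constraint requirement, which the paper meets with the saturating output layer of the CBF-proxy scheme. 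The paper's route, in exchange, is what accounts for ``maximal'' rather than merely ``sufficient''.

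Two minor points. First, Step~3 is not needed to show that $m_0$ suffices. Read literally, it also treats the balls as fixed while samples are added. Retraining on a larger set changes $\pi_{\mathbf{W}}$, and with it the centers $\boldsymbol{x}_k^{\star(i)}$, the margins $r_{k+1}(\boldsymbol{x}_k^{\star(i)})$ and $L_\pi$. The monotonicity therefore holds only for a fixed trained policy, which is the same implicit assumption made in the proof of Theorem~\ref{thrm:47}. Second, your rationale for the uniform constant $L$ is off. It is not what lets the cover work across all time steps at once: in Step~1 the covering index $i$ may depend on $k$, and using $L_k$ instead of $L$ would only enlarge the radii. The uniform $L$ simply makes the balls more conservative.
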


\begin{proof}
The proof follows from the compactness of $\mathcal{X}_r(k)$ and the conservative nature of the feasibility radius derived in Lemma \ref{lemma:explicit-radius}.

By assumption, the union of the $m_0$ feasibility balls (constructed using the derived radii) fully covers the safe set
\begin{equation}
    \mathcal{X}_r(k) \subseteq \bigcup_{i=1}^{m_0} B_{\epsilon_{k}^{(i)}}\left(\boldsymbol{x}_{k}^{\star(i)}\right).
\end{equation}

Crucially, the feasibility radius $\epsilon_{k}^{(i)}$ in \eqref{epsilonki} is computed using the uniform Lipschitz coefficient $L$, which represents the global worst-case sensitivity. Therefore, $\epsilon_{k}^{(i)}$ serves as a conservative lower bound (minimum value) for the true localized feasibility radius. The actual feasibility region induced by a sample $\boldsymbol{x}_{k}^{\star(i)}$ is geometrically larger than or equal to the conservative ball $B_{\epsilon_{k}^{(i)}}$. This leads to the following inclusion
\begin{equation}
    B_{\epsilon_{k}^{(i)}}\left(\boldsymbol{x}_{k}^{\star(i)}\right) \subseteq B_{\bar{\epsilon}_{k}^{(i)}}\left(\boldsymbol{x}_{k}^{\star(i)}\right),
\end{equation}
where $\bar{\epsilon}_{k}^{(i)}$ denotes the radii of the true feasible ball around the sample.

Consequently, the union of the actual feasibility regions forms a superset of the conservative cover
\begin{equation}
    \bigcup_{i=1}^{m_0} B_{\bar{\epsilon}_{k}^{(i)}}\left(\boldsymbol{x}_{k}^{\star(i)}\right) \supseteq \bigcup_{i=1}^{m_0} B_{\epsilon_{k}^{(i)}}\left(\boldsymbol{x}_{k}^{\star(i)}\right) \supseteq \mathcal{X}_r(k).
\end{equation}

This inclusion confirms that $m_0$ samples are strictly sufficient. Moreover, since $m_0$ is determined based on the conservative (minimum) radii, the number of samples required under the actual (larger) feasibility regions would effectively be smaller. Thus, $m_0$ serves as a conservative upper bound, representing the maximal number of samples necessarily required for the feasibility guarantee.
\end{proof}

In Figure \ref{Illustration}, we give a visualization and conclusion of the core ideas of the previous theoretical results.

\begin{figure}[htbp]
\centering
\includegraphics[scale=0.11]{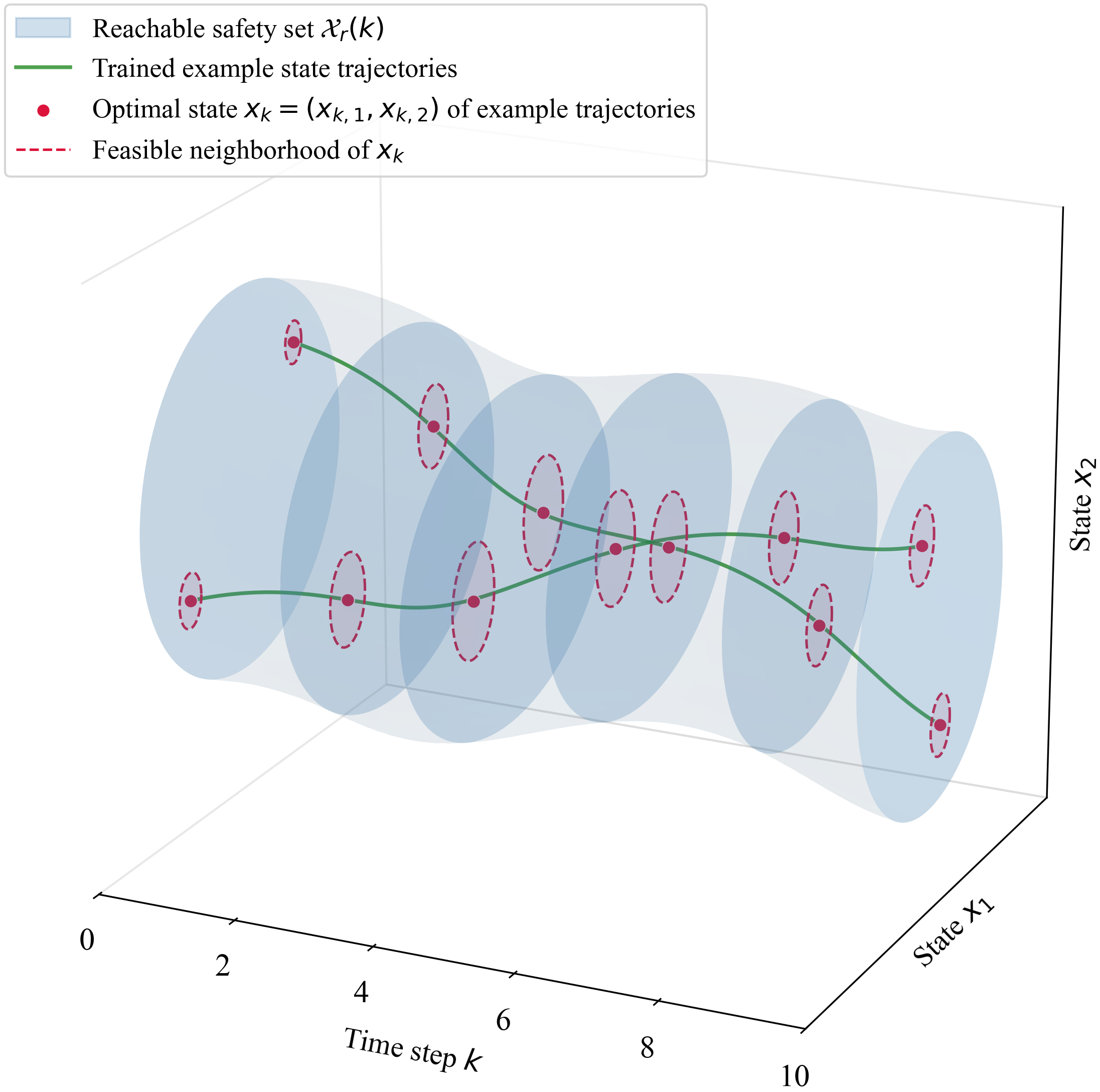}
\caption{Illustration of the feasibility analysis for DPC in a two-dimensional state space. Each training trajectory generated by self-supervised learning of DPC induces a strictly feasible local neighborhood around the nominal state sequence. The size of this neighborhood is determined by the feasibility margin $r_{k+1}(\boldsymbol{x}_k^{\star(i)})$ and the local Lipschitz constant $L$, yielding the conservative feasibility radius $\epsilon_k^{(i)}= r_{k+1}(\boldsymbol{x}_k^{\star(i)}) / L$. As additional training trajectories are introduced, these local feasible regions collectively form a finite cover of the reachable safe set $\mathcal{X}_r(k)$, implying that exact feasibility can be guaranteed through offline training without requiring online optimization.}
\label{Illustration}
\end{figure}

\section{CBF-proxy DPC via a Warm-Started Homotopy Training Scheme for Feasibility Guarantee}

Leveraging the differentiable architecture of the DPC, the previous analyses derived rigorous feasibility guarantees predicated on the assumption that the CBF update condition \eqref{eq:CBF-update} is strictly satisfied across the training dataset. Although optimizing \eqref{Lnc} via soft penalties (e.g. ReLU \cite{drgovna2024learning}) implicitly guides the closed-loop trajectories toward the safe set, guaranteeing exact topological feasibility requires the explicit mathematical enforcement of \eqref{eq:CBF-update}. To systematically ensure strict interiority with respect to the feasibility constraints, we propose an offline CBF-based proxy filtering scheme. The resulting synthesized controller is referred to as the CBF-proxy DPC.

Moreover, as we have mentioned in the preceding sections, directly training a neural controller with strict barrier functions (e.g., logarithmic barriers) is notoriously challenging due to numerical brittleness when initialized outside the safe set, due to which soft penalties are mostly used for feasibility. To overcome this, we propose a two-stage \emph{homotopy} training scheme. This scheme combines the broad optimization shaping of soft penalties (e.g., ReLU and GeLU) during a warm-start phase with the strict enforcement of hard constraints via an offline control barrier function (CBF) safety proxy during fine-tuning.

\subsubsection*{Stage 1: Warm-start via soft penalties}
In the first stage, we train the policy network $\mathbf{W}$ by minimizing the standard DPC loss $\mathcal{L}_{NC}$ defined in \eqref{Lnc}. Here, the constraint violations are penalized using a soft hinge-style function. This allows the neural controller to efficiently explore the state space and quickly converge to a nominally feasible and optimal trajectory manifold, avoiding the singularities of barrier methods. 

Let the state sequence generated by this warm-started neural controller be denoted as the nominal state sequence $\breve{\boldsymbol{x}}^{(i)}_{k}$, for each sample initialization $i$.

\subsubsection*{Stage 2: Offline CBF-proxy fine-tuning}
To strictly enforce feasibility without relying on computationally expensive online optimization (e.g., online safety filters via quadratic programming (QP)) and provide feasibility guarantees in an offline manner for the system trajectory of each sample $i$, we introduce a subsequent offline fine-tuning stage. 

For a given control input $\boldsymbol{u}_{k}^{(i)}$ generated by the neural controller for sample $i$, we define the CBF residual augmented with a non-negative slack variable $\delta_{k,l}^{(i)} \geqslant 0$:
\[
\begin{aligned}
r_{k+1, l}(\boldsymbol{x}^{(i)}_k)
:=& h_{l}\!\left(\boldsymbol{\mathcal{F}}(\boldsymbol{x}^{(i)}_k, \boldsymbol{u}^{(i)}_k),
        \mathbf{p}_{\mathbf{h}_{k+1}}\right) \\
& - (1-\eta)
h_{l}\left(\boldsymbol{x}^{(i)}_{k}, \mathbf{p}_{\mathbf{h}_{k}}\right)+\delta_{k,l}^{(i)}.
\end{aligned}
\]
The slack $\delta_{k,l}^{(i)}$ softens the strict condition \eqref{eq:CBF-update} during early iterations of fine-tuning, preserving numerical feasibility. When \eqref{eq:CBF-update} is fully satisfied, $\delta_{k,l}^{(i)}=0$; otherwise, it quantifies the temporary dynamic violation. 

To embed this mechanism seamlessly into the offline neural network training, we formulate the \emph{CBF-proxy loss}:
\begin{equation}
\label{eq:safety-proxy-loss}
\begin{aligned}
& \mathcal{L}_{CBF}(\boldsymbol{x}_{k}, \boldsymbol{u}_k, \mathbf{r}_k)\\
:= & \sum_{i=1}^m \sum_{k=0}^{N-1} \bigg( \|\boldsymbol{x}^{(i)}_k(\mathbf{W}) - \breve{\boldsymbol{x}}^{(i)}_k\|_2^2
+ \rho\left(\delta_{k,l}^{(i)}\right)^{2}\\
- & \tau \sum_{l=1}^{n_{h}}\log\!\big(r_{k+1, l}(\boldsymbol{x}^{(i)}_k(\mathbf{W})_{k,l}^{(i)})\big) \bigg).
\end{aligned}
\end{equation}

This loss acts as a differentiable, offline analog to a safety filter QP. The first term tracks the optimal nominal trajectory $\breve{\boldsymbol{x}}^{(i)}_k$ obtained from Stage 1. The weight $\rho>0$ heavily penalizes the slack, making $\delta_{k,l}^{(i)}>0$ a last resort. Finally, instead of modifying the CBF itself, the log-barrier term is applied strictly to the residual $r_{k+1, l}$. This biases the learned policy toward the strict interior of the control invariant set over time, alleviating the brittleness of pure penalty methods.

The network parameters $\mathbf{W}$ and the auxiliary slacks are then fine-tuned via the following optimization performed entirely offline
\begin{equation}
\label{eq:safety-filter-opt}
\begin{aligned}
\min_{\mathbf{W},\ \delta_{k,l}^{(i)}\ge0}
& \mathcal{L}_{NC}(\boldsymbol{x}_{k}, \boldsymbol{u}_k, \mathbf{r}_k)+\mathcal{L}_{CBF}(\boldsymbol{x}_{k}, \boldsymbol{u}_k, \mathbf{r}_k),
\end{aligned}
\end{equation}
subject to the system dynamics and operational constraints \eqref{eq:opt_problem_dyn} to \eqref{eq:opt_problem_vars}. By performing Stage 2,  $\delta_{k,l}^{(i)} = 0$ serves as a certificate of feasibility satisfaction for the $l_{\text{th}}$ constraint on system states, considering sample $i$ at time step $k$.

In contrast to the state constraints, the feasibility of the control inputs can be guaranteed \textit{a prior} by the network architecture, which is a less challenging task. Specifically, strict satisfaction of the input bounds is enforced by applying a saturating activation function paired with an affine transformation (e.g., a normalized hyperbolic tangent mapping) at the final layer of the neural policy, thereby mathematically confining the generated control inputs to the admissible region.

Based on the previous results of this section, we are now able to construct a DPC with CBF-proxy loss for ODE, of which an algorithm is given in Algorithm \ref{alg:1}. CBF-proxy DPC successfully bridges the gap between soft neural training and strict deterministic safety. It is worth noting that, for the proposed CBF-proxy DPC scheme, we do not require any online optimizer to adjust the inferred control inputs. Such online ``safety filtering'' mechanisms are typically introduced to enforce the feasibility of the inferred control actions, but they may undermine the main advantages of learning-based controllers, namely their fast inference and trajectory-wise optimality. 

\begin{algorithm}
    \caption{Training of CBF-proxy DPC}
    \label{alg:1}
    \begin{algorithmic}[1]
        \Require The prediction horizon $N$; the system dynamics \eqref{eq:opt_problem_dyn}; the feasible state and control conditions \eqref{eq:opt_problem_state}, \eqref{eq:opt_problem_input}; maximum training epochs $E_1$ and $E_2$.
        \State Sample $m$ initial system states $\boldsymbol{x}_{0}^{(i)}, 1 \leqslant i \leqslant m$
        
        \Statex \textbf{Stage 1: Warm-start via soft penalties}
        \State $e \gets 0$
        \While{$e < E_1$}
            \State Update the policy weights $\mathbf{W}$ for one epoch by minimizing the cost function $\mathcal{L}_{NC}(\boldsymbol{x}_{k}, \boldsymbol{u}_k, \mathbf{r}_k)$ \eqref{Lnc}
            \State $e \gets e + 1$
        \EndWhile
        
        \Statex \textbf{Stage 2: CBF-proxy fine-tuning}
        \State $e \gets 0$
        \While{$e < E_2$}
            \State Update the policy weights $\mathbf{W}$ for one epoch by minimizing the cost function $\mathcal{L}_{NC}(\boldsymbol{x}_{k}, \boldsymbol{u}_k, \mathbf{r}_k) + \mathcal{L}_{CBF}(\boldsymbol{x}_{k}, \boldsymbol{u}_k, \mathbf{r}_k)$ \eqref{eq:safety-proxy-loss}
            \State Check the satisfaction of feasibility $l$, namely $\delta_{k,l}^{(i)}$ for sample $i$ at time step $k$
            \If{$\delta_{k,l}^{(i)} = 0$ for all $i, k, l$}
                \State \textbf{break} \Comment{Early stopping if strict safety constraints are satisfied}
            \EndIf
            \State $e \gets e + 1$
        \EndWhile
    \end{algorithmic}
\end{algorithm}

We would also like to emphasize that the offline training with CBF-proxy loss serves two purposes. First, it improves the feasibility of the control actions inferred at the training samples. Second, it allows us to identify, for each data sample $i$ and constraint $l$ at time step $k$, whether feasibility is ensured, i.e., whether $\delta_{k, l}^{i*} = 0$. We then have the following proposition.

\begin{proposition}[Deterministic closed-loop feasibility]
\label{thrm:deterministic_feasibility}
Let Assumptions~\ref{assmp:41}--\ref{assmp:43} hold. Let $\mathcal{D}_m = \{\boldsymbol{X}^{\star(i)}\}_{i=1}^m$ be a finite dataset of $m$ training trajectories generated by Algorithm~\ref{alg:1}. Suppose that after the offline CBF-proxy fine-tuning (Stage 2), the optimal slack variables satisfy $\delta_{k,l}^{(i)*} = 0, \forall i \in \{1,\dots,m\}, \forall k \in \{0,\dots,N-1\}, \ \forall l \in \{1,\dots,n_h\}$, thereby ensuring strict one-step interior feasibility $r_{k+1,l}(\boldsymbol{x}_k^{\star(i)}) > 0$. 
If the training dataset $\mathcal{D}_m$ is sufficiently dense such that the induced open feasibility balls cover the reachable safe set at each time step $k$, i.e.,
\begin{equation*}
    \mathcal{X}_r(k) \subseteq \bigcup_{i=1}^{m} B_{\epsilon_k^{(i)}}\left(\boldsymbol{x}_k^{\star(i)}\right), \quad \forall k \in \{0, \dots, N\}
\end{equation*}
where the conservative local feasibility radius is given by $\epsilon_k^{(i)} = \frac{r_{k+1}(\boldsymbol{x}_k^{\star(i)})}{L}$ with the uniform Lipschitz constant $L$ defined in \eqref{LipschitzCoeff}, then the learned DPC policy $\pi_{\mathbf{W}}(\boldsymbol{\mathrm{x}}_k, \boldsymbol{\xi}_k)$ is recursively feasible by design. 
\end{proposition}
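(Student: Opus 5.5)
The plan is an induction on the time step $k$. The inductive claim is: if the closed-loop state $\boldsymbol{x}_k$ produced by $\pi_{\mathbf{W}}$ lies in $\mathcal{X}_r(k)$, then the successor $\boldsymbol{x}_{k+1}=\boldsymbol{\mathcal{F}}(\boldsymbol{x}_k,\pi_{\mathbf{W}}(\boldsymbol{x}_k,\boldsymbol{\xi}_k))$ lies in $\mathcal{X}_r(k+1)$. Combined with the saturating output layer, which guarantees $\boldsymbol{u}_k\in\mathcal{U}$ for every input, this yields recursive feasibility of both states and inputs along the whole horizon for any test initial condition $\boldsymbol{x}_0\in\mathcal{X}_r(0)$.

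\textbf{Step 1: certify interior feasibility on the data.} The certificate $\delta^{(i)*}_{k,l}=0$ and the finiteness of the log-barrier term in \eqref{eq:safety-proxy-loss} give $r_{k+1,l}(\boldsymbol{x}_k^{\star(i)})>0$ for all $i,k,l$. Hence every radius $\epsilon_k^{(i)}$ is strictly positive. Lemma~\ref{lem:r-lip} supplies the Lipschitz constant \eqref{LipschitzCoeff}. For multiple constraints I would take $L$ as the maximum over $l$ and $k$ and use $\min_l r_{k+1,l}$ in the radius, so that one ball serves all constraints simultaneously.

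\textbf{Step 2: the inductive step.} Fix $k$ and $\boldsymbol{x}_k\in\mathcal{X}_r(k)$. By the covering hypothesis there is an index $i$ with $\boldsymbol{x}_k\in B_{\epsilon_k^{(i)}}(\boldsymbol{x}_k^{\star(i)})$. Lemma~\ref{lemma:explicit-radius} then gives $r_{k+1,l}(\boldsymbol{x}_k)\ge 0$, that is,
\[
h_l(\boldsymbol{x}_{k+1},\mathbf{p}_{\mathbf{h}_{k+1}})\ \ge\ (1-\eta)\,h_l(\boldsymbol{x}_k,\mathbf{p}_{\mathbf{h}_k}).
\]
Since $\boldsymbol{x}_k\in\mathcal{X}(k)$, the right-hand side is nonnegative because $\eta\in(0,1]$, so $\boldsymbol{x}_{k+1}\in\mathcal{X}(k+1)$. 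The control applied is admissible, so by the definition of $\mathcal{R}(k+1)$ (with zero disturbance) we also have $\boldsymbol{x}_{k+1}\in\mathcal{R}(k+1)$. Together these give $\boldsymbol{x}_{k+1}\in\mathcal{X}_r(k+1)$, and iterating from $k=0$ to $N-1$ completes the argument. The trajectory-space formulation, Corollary~\ref{corollary:open_neighborhood} and Theorem~\ref{thrm:46}, is the same statement stacked over $k$. Only the per-step covers are actually needed.

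\textbf{Main obstacle.} The hardest step is justifying that Lemma~\ref{lemma:explicit-radius} applies to the test state. That lemma requires the ball $B_{\epsilon_k^{(i)}}$ to lie inside the neighborhood where the local Lipschitz bounds of Assumptions~\ref{assmp:41}--\ref{assmp:43} hold, and the test state's next step must stay in the compact sets $\mathscr{X},\mathscr{W}$. My first attempt would use the compactness of $\mathcal{X}_r(k)$ and $\mathcal{U}$ (Lemma~\ref{lemma:43} and its corollary) to upgrade the local constants to uniform ones on $\mathscr{X}\times\mathscr{W}$, so that $L$ is a genuine global constant there and no neighborhood restriction remains. The Lipschitz neighborhood then contains $\mathscr{X}$, and every ball intersected with $\mathcal{R}(k)$ lies in the region where the bound holds, as in the final inclusion of Lemma~\ref{lemma:explicit-radius}. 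A secondary point is the interpretation of the policy: the $\boldsymbol{x}_k^{\star(i)}$ must be the closed-loop states of the final fine-tuned $\pi_{\mathbf{W}}$, the same policy deployed at test time, so that $r_{k+1}$ in Lemma~\ref{lem:r-lip} is evaluated with the deployed controller. I would state this explicitly.
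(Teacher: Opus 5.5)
Your proposal is correct and is essentially the argument the paper relies on. The paper gives no separate proof; it chains Lemma~\ref{lemma:explicit-radius} with the per-step covering hypothesis, the CBF forward-invariance property and the saturating output layer (see the sentence following the proposition). Your induction makes that chain explicit, in particular the step $h_l(\boldsymbol{x}_{k+1},\mathbf{p}_{\mathbf{h}_{k+1}})\geqslant(1-\eta)h_l(\boldsymbol{x}_k,\mathbf{p}_{\mathbf{h}_k})\geqslant 0$ and the membership $\boldsymbol{x}_{k+1}\in\mathcal{R}(k+1)$, which the paper only alludes to.

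You handle the Lipschitz-neighborhood caveat more carefully than the paper, which silently drops it. One refinement is needed: the uniform constant for $h(\cdot,\mathbf{p}_{\mathbf{h}_{k+1}})$ must hold on a compact set containing $\boldsymbol{\mathcal{F}}(\mathscr{X}\times\mathscr{W})$, not merely $\mathscr{X}$, because the successor of a test state is not known to lie in $\mathscr{X}$ beforehand. Obtaining that constant requires $h$ to be locally Lipschitz everywhere, rather than only near the optimizers as Assumption~\ref{assmp:42} states.
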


Consequently, for any initial state $\boldsymbol{\mathrm{x}}_0 \in \mathcal{X}_r(0)$, every closed-loop trajectory generated by the explicit policy satisfies the state and input constraints deterministically for all execution steps, namely, $h_l(\boldsymbol{\mathrm{x}}_k, \mathbf{p}_{\mathbf{h}_k}) \geqslant 0 \quad \text{and} \quad g(\boldsymbol{\mathrm{u}}_k, \mathbf{p}_{\mathbf{g}_k}) \geqslant 0, \quad \forall k \geqslant 0$, without requiring online optimization or \textit{a posteriori} safety filtering.

\begin{remark}
    Well-distributed samples are essential to the proposed framework. Related work includes recent feasible-set sampling methods for MPC, such as the LP-based Hit-and-Run approach~\cite{Milios2026}. Establishing explicit theoretical guarantees, including sample-complexity bounds and covering-number analyses, remains an important direction for future research.
\end{remark}

\section{Exact Feasibility by DPC: A Perspective}

While the theoretical results presented in the preceding sections primarily serve as existence proofs, they unveil a fundamental shift in the certification of learning-based control systems. In the existing literature, safe learning-based control predominantly addresses feasibility through a probabilistic lens, often relying on concentration inequalities, statistical confidence bounds, or chance-constrained formulations. Such methodologies conventionally treat the system dynamics merely as a generative environment for empirical validation, yielding guarantees that hold only with high probability. In contrast, the proposed CBF-proxy DPC framework exploits the explicit white-box structure intrinsic to DPC, effectively transitioning the validation paradigm from stochastic approximation to deterministic topological certification. By proving that strict feasibility can be ensured uniformly across all admissible trajectories using only a finite number of training samples, we establish a rigorous deterministic foundation that is largely absent in contemporary learning-based control.

This deterministic guarantee necessitates a philosophical reinterpretation of the training dataset. In standard supervised or reinforcement learning paradigms, an increase in data volume primarily correlates with a reduction in expected empirical risk. However, under the proposed topological framework, data samples acquire a rigorous geometric interpretation. As demonstrated in Section 3, each training trajectory that strictly satisfies the feasibility condition induces a localized safety certificate, mathematically formalized as an open Euclidean ball within the state space. By invoking the compactness of the reachable safe set and the Heine-Borel theorem, we establish that a finite subcover of these localized certificates is sufficient to tile the entire solution manifold. Consequently, the concept of sample efficiency is redefined: the algorithmic objective is no longer to asymptotically approximate an underlying probability distribution, but rather to identify a finite set of geometric anchors that deterministically cover a compact space.

Central to this topological certification is the fundamental dual role of the system dynamics. Black-box reinforcement learning algorithms typically restrict the role of the environment to that of a forward simulator, providing state transitions and reward signals while obscuring the underlying mathematical structure. Conversely, DPC's differentiable architecture grants direct analytical access to the system Jacobians, thereby facilitating a precise derivation of the solution manifold's Lipschitz properties. This mathematical transparency allows for the explicit quantification of the localized feasibility radii and, by extension, the maximal necessary sample complexity required to ensure strict feasibility. Thus, the system dynamics transcend their conventional utility as a mere trajectory generator; their intrinsic structural properties are actively harvested to provide a deterministic safety certification that remains fundamentally inaccessible to model-free methodologies.

Although the current findings establish a critical existence proof---demonstrating that the probability of constraint violation converges to zero given a finite number of suitably distributed training samples---they simultaneously illuminate a constructive pathway for future algorithmic design. Recognizing that the conservative feasibility radii furnish a strict upper bound on the necessary sample size, future research must pivot toward topology- and geometry-guided active sampling. Rather than relying on uniform or random data generation, tailored sampling algorithms could be synthesized for specific classes of system dynamics. Such algorithms would strategically identify and prioritize training samples capable of covering maximal subsets of the reachable space, thereby achieving exact feasibility with an optimally minimal dataset.

Ultimately, the profound significance of DPC extends far beyond its well-documented computational advantages over implicit MPC in real-time applications. By seamlessly integrating an explicit, white-box dynamical model within a differentiable neural policy framework, DPC uniquely bridges the critical chasm between rapid learning-based inference and rigorous, deterministic safety certification. It elevates the neural controller from a statistically approximate heuristic to a topologically certifiable operator. Consequently, this paradigm not only substantiates the deployment of learning-based predictive control in stringent, safety-critical environments but also establishes a foundational theoretical intersection where finite data sampling, system geometry, and classical control theory converge to yield exact, uncompromised feasibility.

\section{Numerical results}

In this section, we apply the proposed CBF-proxy DPC algorithm to three problems for validating our theoretical observations in the previous sections and the performance of our proposed algorithms.

\subsection{Nonholonomic mobile robot navigation}

We first propose to treat a nonholonomic mobile robot navigation problem. The system represents a standard unicycle model navigating through a 2D grid environment populated with static obstacles. Its kinematics are described by the following system of ordinary differential equations
\begin{equation} 
\begin{aligned} 
\dot{x}_t &= v_t \cos(\theta_t) \\
\dot{y}_t &= v_t \sin(\theta_t) \\
\dot{\theta}_t &= \omega_t.
\end{aligned} 
\end{equation} 

Here $\boldsymbol{x}_t = [x_t, y_t, \theta_t]^\top \in \mathbb{R}^3$ is the state vector representing the 2D spatial coordinates and heading angle of the robot, $\boldsymbol{u}_t = [v_t, \omega_t]^\top \in \mathbb{R}^2$ is the control input comprising the linear and angular velocities, respectively. In each case, a random map is generated, and the robot must navigate from a start position $(0.5,0.5)$ with an initial heading of $\pi / 4$ to the goal $(9.5,9.5)$ over a horizon of $T=120$ time steps $(\Delta t=0.4s)$. The safety parameters are selected as $\eta=0.1, \rho=50$, and $\tau=0.05$.

For digital implementation, this continuous-time model is discretized. The deterministic drift component is integrated using a forward Euler method with a sampling time of $\Delta t = 0.4$\,s, yielding a nominal discrete-time map $\boldsymbol{\mathcal{F}}(\boldsymbol{x}_k, \boldsymbol{u}_k)$. The prediction and control horizon is set to $N = 120$ steps. The operating environment is configured as a $10 \times 10$ spatial grid map, where an obstacle ratio of $10\%$ is maintained by randomly assigning static obstacles. The side length of each grid is $1$.

The training process of the CBF-proxy DPC is structured into two distinct stages. We first train the base DPC policy for $800$ epochs to establish fundamental goal-reaching and obstacle-avoidance behaviors. This stage minimizes a composite hinge-loss cost function penalizing distance to the target, boundary violations, obstacle collisions, and control effort. Upon convergence of this nominal control policy, we further fine-tune the network for $200$ epochs using the DPC loss together with the safety-proxy loss to strictly improve the safety feasibility of the trajectories. Specifically, we enforce a discrete-time CBF residual condition with $\eta = 0.1$. During this second stage, the optimization explicitly penalizes the slack variable with a quadratic weight of $\rho = 50.0$ and applies a logarithmic barrier function with a weight of $\tau = 0.05$ to push the residual $r_{k+1}$ away from zero.

The control policy is parameterized by a multi-layer perceptron (MLP) featuring an encoder with three hidden layers of $512$, $256$, and $128$ neurons, utilizing GeLU activation functions. The network branches into two output heads to simultaneously predict the control sequence $\boldsymbol{u}$ and the relaxation variables $\boldsymbol{\delta}$. To train the policy, we construct a dataset of $N_s = 100, 200, 400, 600, 800, 1000$ randomly generated environments. The parameters of the neural policy are optimized using the Adam optimizer with a learning rate of $10^{-3}$ and a batch size of $64$. The key benefit of this formulation is that it is fully differentiable, enabling the use of automatic differentiation over the unrolled unicycle dynamics to directly compute the policy gradients. Finally, to rigorously evaluate the safety guarantees, the trained policy is subjected to a statistical performance test across $1000$ unseen environment configurations to quantify the empirical collision rate.

\begin{figure}[htbp]
\centering
\includegraphics[scale=0.2]{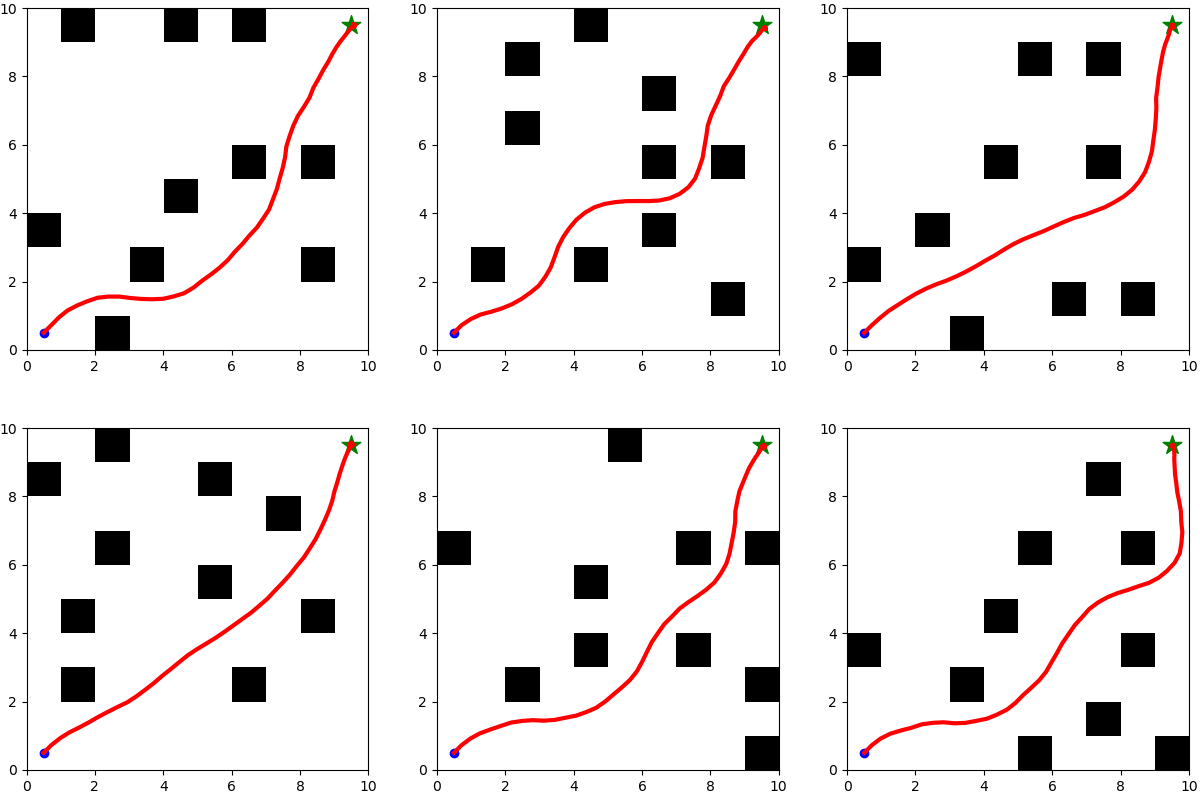}
\centering
\caption{Six sample trajectories of the mobile robot navigation inferred by the proposed CBF-proxy DPC algorithm. }
\label{fig2}
\end{figure}

For evaluation, we conduct statistical tests over $1000$ independent test environment cases. At every time step along these trajectories, we check whether the state violates the constraints (namely violates the obstacle constraints) and then report the pointwise violation ratio over all state samples and the fraction of trajectories that exhibit at least one violation. Resulting curves for the pointwise violation ratios and trajectory violation ratios versus the number of training data samples are plotted in Figure \ref{fig3} and Figure \ref{fig4} respectively. We note that both curves decrease monotonically with the increase of the number of the training data samples, which converge to zero when $1000$ samples are used for training. These results are consistent with theoretical results in Theorem~\ref{thrm:47}, providing empirical validation of the theorem.

\begin{figure}[htbp]
\centering
\includegraphics[scale=0.25]{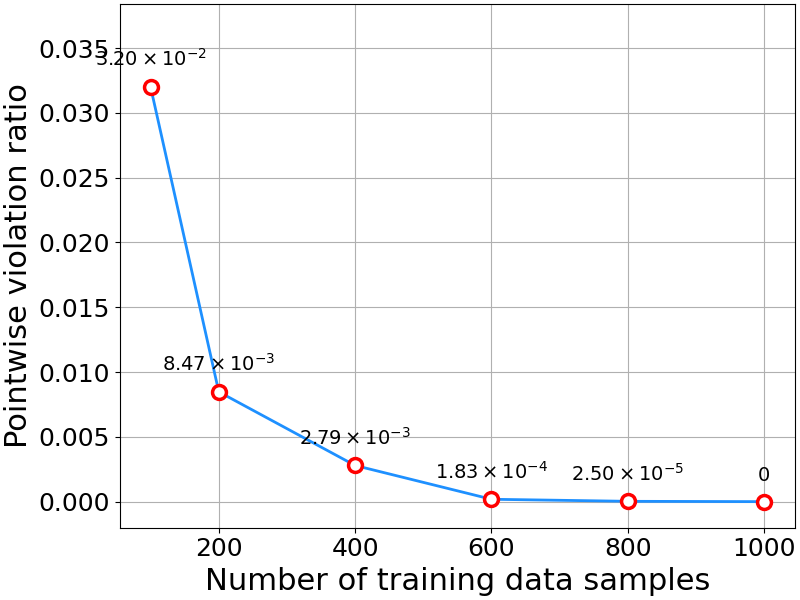}
\centering
\caption{The figure of total violation points versus the number of training data samples. The ratio decreases monotonically with the increase of the training data samples.}
\label{fig3}
\end{figure}

\begin{figure}[htbp]
\centering
\includegraphics[scale=0.25]{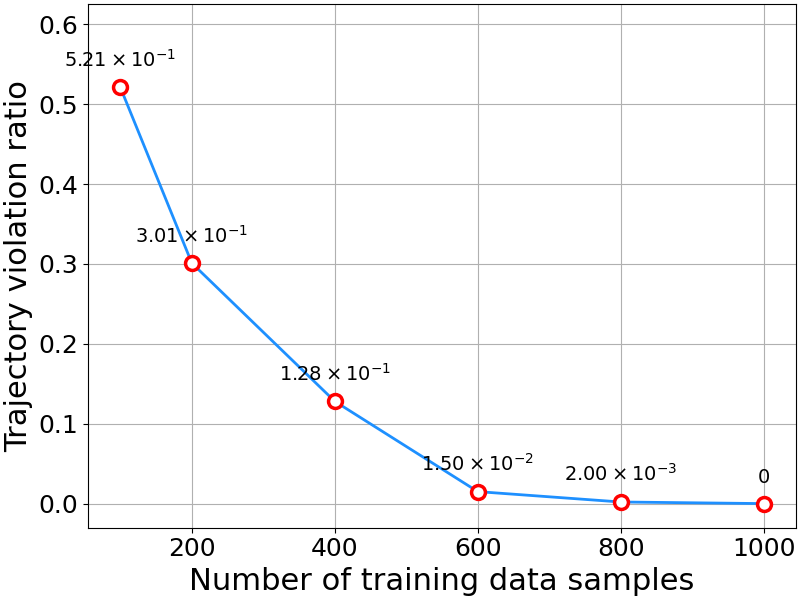}
\centering
\caption{The figure of trajectory violation ratio versus the number of training data samples.}
\label{fig4}
\end{figure}

\subsection{Unstable linear system stabilization}
To evaluate the proposed algorithm on a more challenging regulation task, we consider the stabilization of an open-loop unstable continuous-time linear system. Specifically, we investigate a single-input underactuated system governed by the following dynamics
\begin{equation}
    \dot{x}(t) = A x(t) + B u(t)
\end{equation}
where $x(t) \in \mathbb{R}^{2}$ is the state vector and $u(t) \in \mathbb{R}$ is the scalar control input. The entries of both the state matrix $A \in \mathbb{R}^{2 \times 2}$ and the input matrix $B \in \mathbb{R}^{2 \times 1}$ are randomly sampled from a standard normal distribution. To strictly enforce open-loop instability, the randomly initialized state matrix $A$ is subsequently shifted, if necessary, to ensure that the real part of its maximum eigenvalue is strictly greater than 0.2. Meanwhile, the randomly generated input matrix $B$, which maps the scalar control input into the state derivative space, yields a non-trivial, coupled underactuated configuration. To guarantee exact reproducibility of these randomized system dynamics, a fixed random seed is employed throughout the numerical experiments. For digital implementation and offline training, the continuous-time dynamics are discretized using a fourth-order Runge-Kutta (RK4) integrator with a sampling time of 0.05 s. 

The control objective is to stabilize the system to the origin while strictly adhering to hard physical constraints. Specifically, the actuator control input is strictly bounded within $u_k \in [-20, 20]$, and the safe feasible region is defined as a bounded hypercube in the state space, $x_k \in [-16, 16]^{2}$.

The neural control policy is parameterized by an MLP architecture featuring two hidden layers with 64 neurons each and GeLU activation functions. To structurally guarantee that the inferred control actions never violate the actuator limits, the output layer of the network is explicitly bounded. The neural policy is trained offline using a dataset of $50$, $75$, $100$, $200$, $300$ initial conditions sampled uniformly from a localized region of $[-2, 2]^{2}$ respectively. The optimization objective is formulated as a composite loss function comprising a standard quadratic regulation cost with a state weight of 10 and a control weight of 0.1, alongside a heavy terminal penalty of 100 to enforce asymptotic convergence. Furthermore, to embed safety specifications during the unrolled trajectory optimization, a ReLU-based soft penalty is formulated; this penalty activates whenever the predicted states approach within a predefined margin of 1.0 from the hard state boundaries.

Training a neural policy over long prediction horizons for unstable dynamics typically suffers from the exploding gradient problem. To systematically mitigate this and ensure strict safety, the training procedure employs a two-stage strategy. In the warm-up training stage, the network focuses on driving the system to the origin without the safety-proxy loss active. This stage uses a curriculum learning approach where the prediction horizon is progressively expanded from a short horizon of $10$ steps up to $200$ steps. The network parameters are updated using the AdamW optimizer, with the learning rate decaying from $5 \times 10^{-3}$ to $1 \times 10^{-3}$ as the horizon extends. In the second stage, the network is trained by both DPC and safety-proxy losses for $100$ epochs over a fixed maximum prediction horizon of $200$ steps. During this phase, the safety-proxy loss is activated to push trajectories safely away from the hard boundaries, and the learning rate is further reduced to $5 \times 10^{-4}$. This two-stage mechanism enables the neural controller to first learn fundamental stabilization maneuvers before mastering long-term constraint satisfaction and asymptotic regulation.

To empirically validate the topological feasibility and stabilization guarantees, closed-loop Monte-Carlo simulations were conducted post-training. The learned policies were evaluated on $1000$ independent, randomly generated initial conditions drawn from the initial state region. Each scenario was simulated for $1000$ time steps (with a sampling time of $T_s = 0.05s$), corresponding to $50s$ of closed-loop operation. Figure \ref{fig5} and Figure \ref{fig7} show the phase portraits of $1000$ randomly initialized test samples with the CBF-proxy DPC trained by $50$ and $100$ training samples respectively. Figure \ref{fig6} and Figure \ref{fig8} show the state trajectories and control inputs corresponding to $50$ and $100$ training samples. As illustrated in Figure \ref{fig9}, the constraint violation rate among test trajectories decreases monotonically, ultimately converging to zero as the training dataset size reaches $100$. Furthermore, the proposed CBF-proxy DPC exhibits a significantly lower violation rate compared to the standard DPC baseline.

The numerical results confirm that the proposed learning-based controllers succeeds in stabilizing the divergent linear dynamics. Crucially, the empirical evaluation reveals that the constraint violation rate monotonically decreases as the number of training samples increases. As observed, while training with $50$ samples yields a $96.4\%$ convergence rate with severe divergence in uncovered regions in the reachable set, expanding the dataset size effectively mitigates these edge cases, actively corroborating our theoretical proofs. Furthermore, the ablation study results shown in Figure \ref{fig9} justify the integration of the safety-proxy loss. Comparative results demonstrate that fine-tuning the neural policy with this safety-proxy loss essentially decreases the ratio of violating the constraints. These findings validate both the deterministic feasibility of the inferred neural control sequences under sufficient data regimes and the critical role of the safety-proxy loss in ensuring strict constraint satisfaction.
\begin{figure}[htbp]
\centering
\includegraphics[scale=0.3]{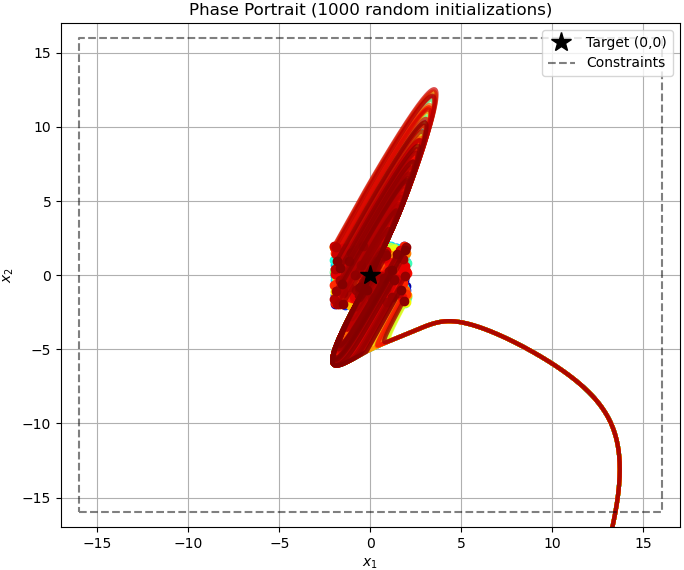}
\centering
\caption{Phase trajectories of $1000$ samples in the testing dataset, which are initialized randomly. The CBF-proxy DPC is trained by $50$ data samples. By the proposed algorithm, $96.4\%$ of the testing samples are successfully stabilized to the origin.}
\label{fig5}
\end{figure}

\begin{figure}[htbp]
\centering
\includegraphics[scale=0.3]{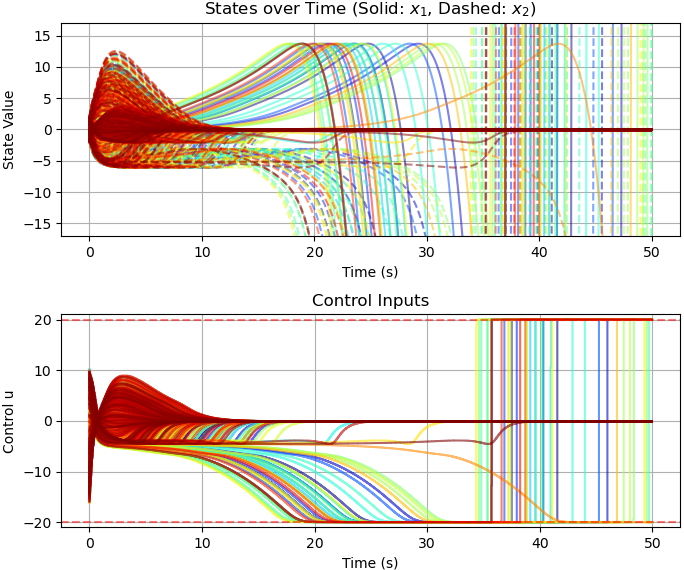}
\centering
\caption{State trajectories and control inputs of test data using $50$ training samples. While $96.4\%$ of the test trajectories successfully converge to the origin, the remaining cases exhibit severe divergence. This indicates that certain regions of the reachable set remain uncovered by the feasible neighborhoods of the training data.}
\label{fig6}
\end{figure}

\begin{figure}[htbp]
\centering
\includegraphics[scale=0.3]{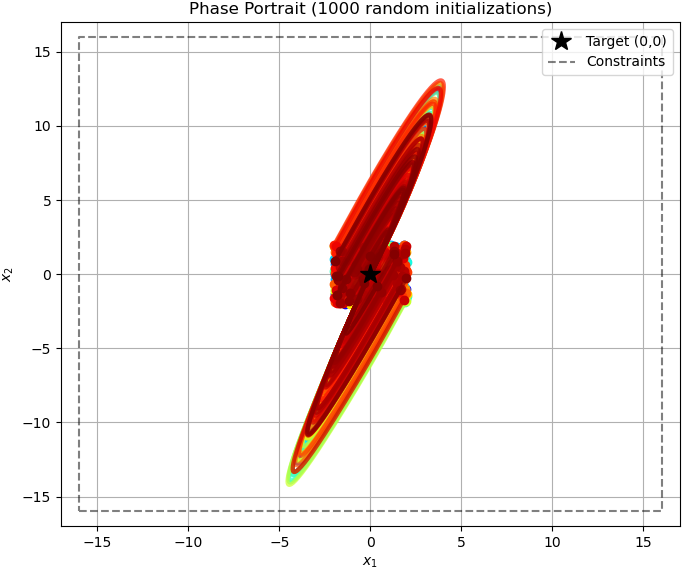}
\centering
\caption{Phase trajectories of $1000$ samples in the testing dataset, which are initialized randomly. The CBF-proxy DPC is trained by $100$ data samples. By the proposed algorithm, all testing samples are successfully stabilized to the origin.}
\label{fig7}
\end{figure}

\begin{figure}[htbp]
\centering
\includegraphics[scale=0.3]{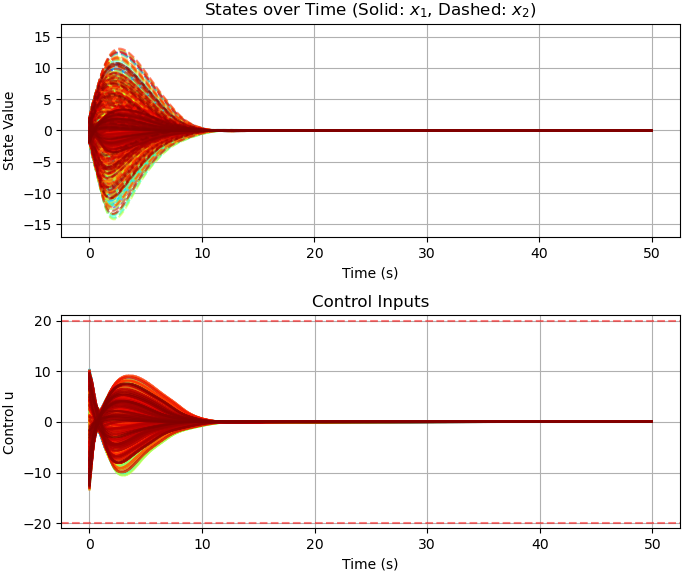}
\centering
\caption{State trajectories and control inputs of test data using $100$ training samples, all of which converge to the origin.}
\label{fig8}
\end{figure}

\begin{figure}[htbp]
\centering
\includegraphics[scale=0.25]{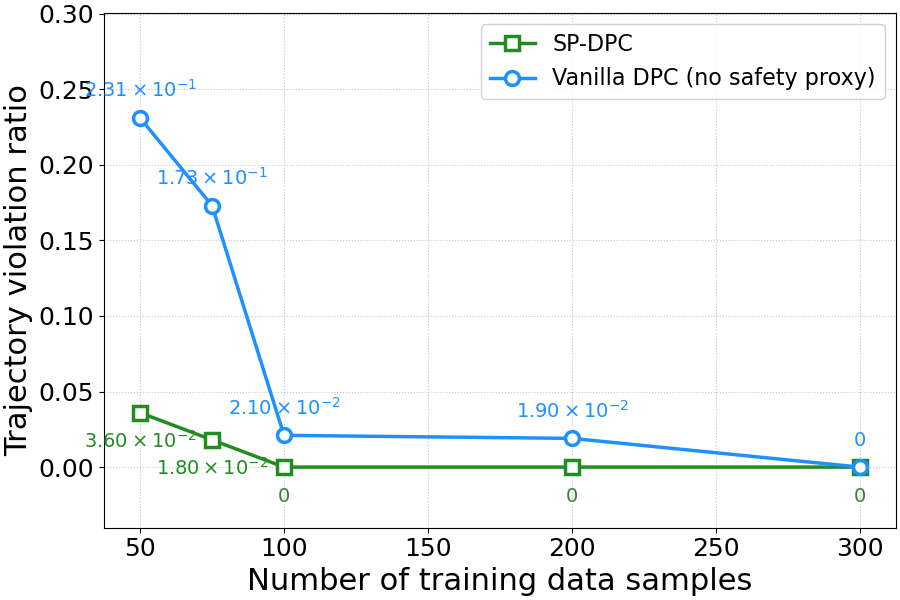}
\centering
\caption{The ratio of test data violating constraints versus the number of training data samples for CBF-proxy DPC and vanilla DPC, respectively. We note that the violation ratio is significantly smaller for the former one.}
\label{fig9}
\end{figure}

\subsection{Constrained quadcopter system stabilization}
In the following part of this section, we simulate our algorithm on a 12-dimensional nonlinear quadcopter system, a benchmark widely recognized for its highly coupled, underactuated dynamics and stringent safety requirements. The system state is defined as $x = [p_x, p_y, p_z, \phi, \theta, \psi, \dot{p}_x, \dot{p}_y, \dot{p}_z, \dot{\phi}, \dot{\theta}, \dot{\psi}]^T \in \mathbb{R}^{12}$, comprising the 3D position, Euler angles (roll, pitch, yaw), and their respective linear and angular velocities. The control input $u \in \mathbb{R}^4$ corresponds to the individual thrust forces generated by the four rotors. Because the quadcopter has six degrees of freedom but only four actuators, it is severely underactuated; any translational motion in the horizontal plane ($X$-$Y$) is intrinsically coupled with the rotational dynamics (roll and pitch), making constraint satisfaction uniquely challenging.
While the physical quadcopter is governed by complex nonlinear equations (involving trigonometric couplings and aerodynamic drag), we employ its widely-used discrete-time linearized model around the hovering equilibrium point (with a sampling time $T_s = 0.1$ s) for the synthesis and evaluation of the predictive controllers. This formulation not only provides a standardized framework for fair performance benchmarking but also mirrors standard practices in real-time model predictive control.

The primary control objective is to stabilize the quadcopter at a predefined hovering altitude (e.g., $p_z = 1.0$ m, with zero tilt and zero velocity) from various initial conditions, while strictly adhering to hard state constraints and actuator saturation. To maintain the validity of the small-angle approximation inherent in the linearized model, strict attitude constraints are imposed on the roll and pitch angles ($\phi, \theta \in [-\pi/6, \pi/6]$), alongside boundaries for the $X$ and $Z$ axes. Furthermore, the control inputs are bounded by physical motor saturation limits, $u \in [u_{\min}, u_{\max}]$, defined relative to the nominal hovering thrust. Synthesizing a neural policy that simultaneously guarantees asymptotic tracking and strict constraint satisfaction under these coupled conditions is notoriously challenging, particularly in data-scarce regimes.

To empirically validate the data efficiency and safety robustness of the proposed framework, we benchmark our CBF-proxy DPC against a standard DPC baseline. The baseline is trained with a conventional soft-penalty loss function, which superimposes a quadratic (LQR-type) state tracking and control effort cost with ReLU-based exterior penalty terms. We provide the phase portrait of position $X$ and altitude $Z$ for $200$ samples in the total $2000$ with $16$ training samples, in Figure \ref{fig10}. In Figure \ref{fig12}, the closed-loop performance is quantitatively evaluated using the trajectory violation ratio, defined as the proportion of simulated trajectories that breach any state or input constraints during the operational horizon. To rigorously assess the sample efficiency and generalization capability of the neural policies, this metric is evaluated across severely constrained training datasets, ranging from $N = 6$ to $N = 16$ initial state samples. The monotonic decrease of the trajectory violation ratio with the increase of the number of training data samples also align with our theoretical observations.

\begin{figure}[htbp]
\centering
\includegraphics[scale=0.3]{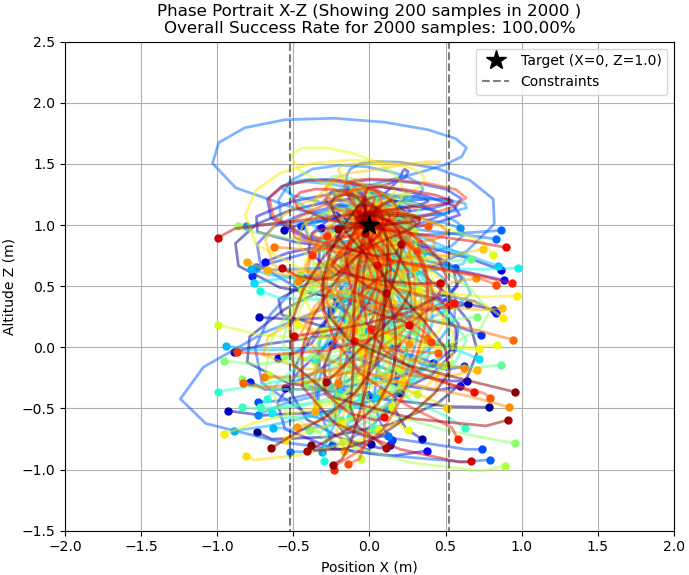}
\centering
\caption{Phase trajectories of $200$ in $2000$ testing data samples, which are initialized randomly. The CBF-proxy DPC is trained by $16$ data samples. By the proposed algorithm, all testing samples are successfully stabilized to the desired state.}
\label{fig10}
\end{figure}

\begin{figure}[htbp]
\centering
\includegraphics[scale=0.3]{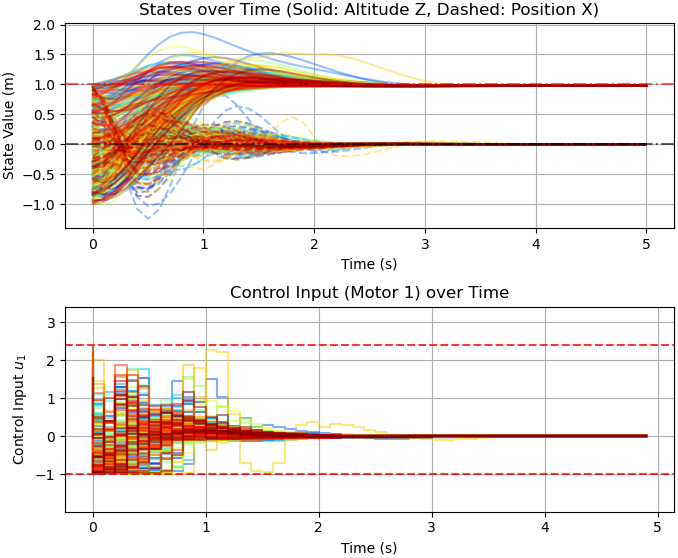}
\centering
\caption{State trajectories (Altitude $Z$ and Position $X$) and control inputs (Control Input $u_{1}$) of test data using $16$ training samples, which show successful stabilization for the $2000$ testing samples. }
\label{fig11}
\end{figure}

\begin{figure}[htbp]
\centering
\includegraphics[scale=0.25]{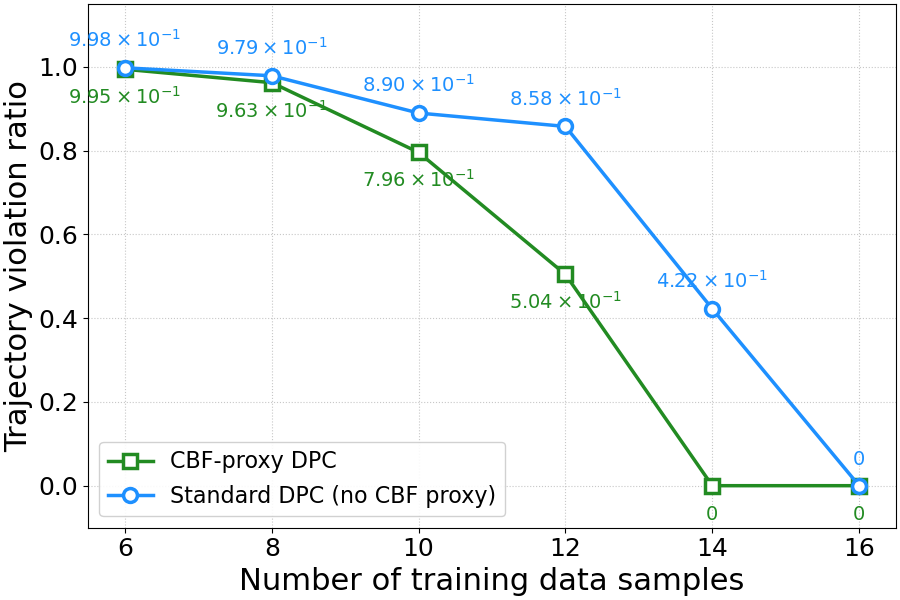}
\centering
\caption{The ratio of test data violating constraints versus the number of training data samples for CBF-proxy DPC and standard DPC, respectively. We note that the violation ratio is significantly smaller for the former one.}
\label{fig12}
\end{figure}

\section{Conclusions}

Learning-based control methods have demonstrated promising results across various engineering domains, offering significantly faster online computation times than implicit MPC and making them highly attractive for real-time implementation. However, the widespread deployment of these learning mechanisms in safety-critical systems remains severely hindered by the absence of rigorous feasibility guarantees. In this paper, we established a formal feasibility theory for DPC by leveraging a novel topological analysis of the induced reachable safe set. To realize this theoretical framework, we introduced an offline policy learning strategy equipped with a dedicated safety-proxy structure designed to strictly enforce constraint satisfaction over the training data.

Crucially, we rigorously proved that the inferred neural control sequences deterministically satisfy all feasibility conditions, provided a finite number of training samples are perfectly optimized to yield zero safety residuals offline. Furthermore, we analytically demonstrated that as the size of this strictly safe training dataset increases, the probability of constraint violation by the inferred policy monotonically decreases, ultimately converging to zero within a finite sample limit. By deeply exploiting the white-box nature of DPC, this work provides, to the best of our knowledge, the first rigorous and deterministic feasibility guarantees for such architectures from a topological perspective, offering a definitive structural advantage over conventional black-box methods.

The theoretical findings in this paper establish fundamental existence-type guarantees regarding topological feasibility and sample complexity. Transitioning from these existence proofs to fully constructive bounds is a natural progression of this research rather than a fundamental limitation. In future work, we intend to leverage deeper geometric and topological insights into specific classes of system dynamics to develop constructive algorithms that explicitly enforce these theoretical feasibility limits \textit{a priori} when training DPC offline.

\bibliographystyle{ieeetr}
\bibliography{Reference}

\end{document}